\newcommand{\tr}{\mathrm{T}}
\newcommand{\norm}[1]{\left\lVert#1\right\rVert}
\g@addto@macro\normalsize{%
\setlength\abovedisplayskip{4.5pt}            
\setlength\belowdisplayskip{4.5pt}
\setlength\abovedisplayshortskip{4.5pt}
\setlength\belowdisplayshortskip{4.5pt}
}
\titlespacing{\section}{0pt}{0.5\baselineskip}{0.5\baselineskip}
\titlespacing{\subsection}{0pt}{0.4\baselineskip}{0.4\baselineskip}
\begin{document}
	\allowdisplaybreaks

\title{Adversarial Observability and Performance Trade-offs in Optimal Control}
\author{Filippos Fotiadis$^1$,~\IEEEmembership{Member,~IEEE}, Ufuk Topcu$^1$,~\IEEEmembership{Fellow,~IEEE}   
\thanks{$^1$ F. Fotiadis and U. Topcu are with the Oden Institute for Computational Engineering \& Sciences, University of Texas at Austin, Austin, TX, USA. Email:
\{ffotiadis, utopcu\}@utexas.edu.}
\thanks{
This work was supported in part by ARO under grant No. W911NF-23-1-0317 and by ONR under grant No. N00014-24-1-2097.}
}
\maketitle

\begin{abstract}
     We develop a feedback controller that minimizes the observability of a set of adversarial sensors of a linear system, while adhering to strict closed-loop performance constraints. We quantify the effectiveness of adversarial sensors using the trace of their observability Gramian and its inverse, capturing both average observability and the least observable state directions of the system. We derive theoretical lower bounds on these metrics under performance constraints, characterizing the fundamental limits of observability reduction as a function of the performance trade-off.  Finally, we show that the performance-constrained optimization of the Gramian's trace can be formulated as a one-shot semidefinite program, while we address the optimization of its inverse through sequential semidefinite programming.   Simulations on an aircraft show how the proposed scheme yields controllers that deteriorate adversarial observability while having near-optimal performance.
\end{abstract}
\vspace{-2mm}
\begin{IEEEkeywords}
Adversarial observability, optimal control, semidefinite programming.
\end{IEEEkeywords}

\vspace{-2mm}
\section{Introduction}

Autonomous systems often face adversaries who want to predict their current state or intent by deploying their own individual sensors. For example, in aviation, external observers can use aircraft position and velocity measurements obtained from radar to identify aircraft intent and predict future trajectories \cite{yepes2007new}. In space, external observers can determine a satellite's active control mode using only unresolved optical images and without needing onboard telemetry data \cite{coder2018inferring}. To ensure security, such systems must thus be able to maneuver in a way that minimizes the observability of potentially malicious observers.

A potential solution is moving target defense -- a security method that aims to make the system's evolution unpredictable and thus disrupts adversarial observability  \cite{kanellopoulos2019moving, sandbergmoving, sinopolimoving, seta}. The underlying concept is that, to impede an adversary from performing system reconnaissance or from predicting future trajectories, one should constantly change the system's control mode. This induces a randomization in the evolution of the system that renders future tasks of the system difficult to identify.
However, switching between different system modes can create discontinuities and transients in control execution, potentially degrading performance and compromising safety. Moreover, when the system has to commit to meeting a single objective, employing different control modes may not always be feasible. These observations highlight the need for observability-reduction mechanisms that do not rely on mode switching, but instead operate within a single, performance-constrained feedback architecture.


To obstruct the state reconstruction capability of adversarial sensors,
several studies have investigated the design of a single controller that
reduces observability \cite{Zhang2023, Zhang2023TAC, Roy2019,
al2022observability}. Related work has also studied the observability
radius as a metric of distance to unobservability \cite{bianchin}, network
design for controllability metrics \cite{becker}, and active
defense strategies that mislead adversarial observers by modifying their
available signals \cite{shaaban}. These approaches, however,
either treat observability reduction as a binary constraint, focus on
network topology rather than controller synthesis, or assume knowledge of
a specific adversarial estimator. Moreover, they do not explicitly
quantify the trade-off between closed-loop performance and observability
reduction. This motivates feedback controllers that enable continuous,
tunable trade-offs between performance and adversarial
observability. \hspace{-2mm}

 We consider a setup where the system's operator, who has full state feedback, wants to design a linear controller to balance two conflicting objectives: i)  to minimize the observability of a set of sensors which an adversary might be using to observe the system; and ii) to stabilize the system to the origin with optimal performance. We cast this dual-objective control problem as a constrained optimization of a metric of unobservability of the adversary's sensors, subject to a trade-off hard constraint on the distance of closed-loop performance from optimality. In contrast to the aforementioned studies that treat unobservability as a binary property, we capture observability continuously, drawing from the sensor selection literature and leveraging observability Gramians \cite{summers2015submodularity, pasqualetti2014controllability, bopardikar, fotiadis2024input, manohar}.

 Gramians are a prime tool for capturing the controllability and observability of linear systems in a continuous manner that goes beyond rank tests. Observability Gramians, in particular, must be inverted to perform least squares state estimation from partial observations \cite{carnevale2013nonlinear}, and hence their spectrum directly affects the quality of the resulting state estimate. For this reason, there is a rich body of literature that correlates how one should select the output nodes of a system with the underlying observability Gramian of those nodes \cite{summers2015submodularity, pasqualetti2014controllability, bopardikar, fotiadis2024input, manohar}. Here, we explore this connection in the context of adversarial observability minimization. Unlike sensor selection, which chooses sensor configurations to maximize observability, our objective is to design the controller to actively reduce the adversary's ability to observe the system.


We specifically quantify the observability of adversarial sensors by adopting the trace of the observability Gramian and its inverse. The former quantifies adversarial observability on average, across all observable directions, while the latter is skewed towards the less observable directions. For both of these metrics, we establish theoretical lower bounds on their optimal values as a function of the trade-off distance from optimal performance. We show that smaller performance trade-offs generally yield a sharper decrease in the trace of the observability Gramian, whereas larger performance trade-offs lead to a steeper increase in the trace of the Gramian's inverse. Finally, we solve the constrained optimization of these metrics i) precisely for the trace of the Gramian, by proving it boils down to the solution of a semidefinite program (SDP); and ii) approximately for the trace of the Gramian inverse, through a sequential SDP algorithm that we obtain by applying the convex-concave procedure.
 

\textit{Notation:} For any matrix $X$, $\norm{X}$ denotes its operator norm.  For a square matrix $X$, $\mathrm{tr}(X)$ denotes its trace, and $\underline{\lambda}(X)$,  $\bar{\lambda}(X)$ denote its minimum and maximum eigenvalue when $X$ is symmetric, and $\kappa(X)$ the condition number. We denote $X\succ0$ ($X\succeq0$) if $X$ is positive definite (positive semidefinite), and $X\prec0$ ($X\preceq0)$ if $X$ is negative definite (negative semidefinite). We use $I$ to denote an identity matrix of appropriate dimensions.

\section{Problem Formulation}\label{sec:pr}

Consider, for all $t\ge0$, the continuous-time system
\begin{equation}\label{eq:sys}
\dot{x}(t)=Ax(t)+Bu(t),~x(0)=x_0,~t\ge0,
\end{equation}
where $x(t)\in\mathbb{R}^n$ is the state, $u(t)\in\mathbb{R}^m$ is the control input, and $A\in\mathbb{R}^{n\times n}$, $B\in\mathbb{R}^{n\times m}$ are the system's state and input matrices. 

\subsection{Operator and Adversary Models}

We assume that an adversary has access to the following partial observations of the state vector $x(t)$:
\begin{equation}\nonumber
y(t)=Cx(t).
\end{equation}
We will refer to this partial observation $y(t)\in\mathbb{R}^p$ as the output, and to $C\in\mathbb{R}^{p\times n}$ as the sensing matrix of the adversary. This sensing matrix corresponds to sensors that the adversary is personally employing to observe the state $x(t)$, but could also correspond to existing sensors of \eqref{eq:sys} that the adversary has compromised. 

 We also assume that the operator of system \eqref{eq:sys}, who designs the control input $u(t)$, has access to the full state $x(t)$. In light of this sensing asymmetry, the operator is motivated to design a control law
\begin{equation}\nonumber
u(t)=Kx(t),
\end{equation}
$K\in\mathbb{R}^{m\times n}$, which does not only optimally stabilize system \eqref{eq:sys} to the origin, but also makes $x(t)$ difficult to observe from the output $y(t)$. In other words, the operator wants to design a feedback gain $K$ so that $A+BK$ is Hurwitz and nearly optimal, but also so that $(A+BK,~C)$ is as unobservable as possible.

We consider the following assumptions, which describe the information structure of the operator and adversary.
\begin{assumption}\label{ass:contr_obs}
$(A, B)$ is controllable.
\end{assumption}
\begin{assumption}\label{ass:assym}
The operator can measure the state $x(t)$ and knows the adversary's matrix $C$. The adversary can measure the output $y(t)$ but cannot measure the operator's control input $u(t)$.
\end{assumption}
\begin{remark}
 Assumption \ref{ass:contr_obs} is standard for the operator's objective of optimally stabilizing \eqref{eq:sys} to the origin. 
Assumption \ref{ass:assym} is indicative of the information asymmetry between the operator and the adversary, which the operator will use to make the task of observing $x(t)$ from $y(t)$ more difficult for the adversary.   Knowledge of $C$ is realistic when the adversary's sensing modality is constrained by physics (e.g., radar can only measure position and velocity \cite{yepes2007new}, while optical telescopes capture external brightness but not onboard telemetry \cite{coder2018inferring}).
Finally, while the operator might often not know exactly which partial states the adversary can measure (as indicated by $C$), they are likely to know a superset of them, and hence Assumption \ref{ass:assym} is not restrictive; we formalize this robustness in Proposition \ref{prop:robustness}. 
\end{remark}

\subsection{Operator's Control Design Problem}

Per the problem formulation, the operator considers two objectives when designing its control gain $K$: i) choosing $K$ to stabilize \eqref{eq:sys} to the origin optimally; and ii) choosing $K$ to make $(A+BK,~C)$ as less observable as possible. 

We capture the first of the two objectives with the cost function
\begin{equation}\label{eq:LQint}
J_s(K)=\mathbb{E}\left[\int_0^\infty (x^\mathrm{T}(t)Qx(t)+u^\mathrm{T}(t)Ru(t))\mathrm{d}t \right],
\end{equation}
where the expectation is taken over the distribution of the initial condition $x_0\sim\mathcal{N}(0,V)$, and $V\succ0$ is the initial condition's covariance. Using standard linear systems theory, we can then simplify the expression of the cost \eqref{eq:LQint} to
\begin{equation}\nonumber
J_s(K)=\mathrm{tr}(PV),
\end{equation}
where $P\succ0$ is the unique solution of the Lyapunov equation
\begin{equation}\label{eq:LE}
(A+BK)^\mathrm{T}P+P(A+BK)+Q+K^\mathrm{T}RK=0.
\end{equation}

To capture the second objective of minimizing the observability of the pair $(A+BK,~C)$, we resort to costs relating to observability Gramians. Following \cite{pasqualetti2014controllability, summers2015submodularity}, two relevant metrics are
\begin{equation}\label{eq:obsfuns}
\begin{split}
J_{o1}(K)&=\mathrm{tr}(WV),\\J_{o2}(K)&=-\mathrm{tr}(W^{-1}V^{-1}),
\end{split}
\end{equation}
where $W\succeq0$ is the observability Gramian of the closed-loop pair $(A+BK,~C)$, uniquely solving the Lyapunov equation
\begin{equation}\label{eq:lyapW}
(A+BK)^\mathrm{T}W+W(A+BK)+C^\mathrm{T}C=0.
\end{equation}
\begin{remark}
The metrics \eqref{eq:obsfuns} relate to the average energy the system releases through the sensors $C$. Specifically, $\mathbb{E}[\int_0^\infty y^\mathrm{T}(t)y(t)\textrm{d}t]=\mathbb{E}[\int_0^\infty x^\mathrm{T}(t)C^\mathrm{T}Cx(t)\textrm{d}t]=\mathrm{tr}(WV)=J_{o1}(K)$, whereas $J_{o2}(K)$ is motivated the inequality $\mathrm{tr}(W^{-1}V^{-1})\ge \frac{n^2}{\mathrm{tr}(WV)}$ and the fact that it becomes unbounded when the closed loop is not observable.
\end{remark}

\begin{remark}
Beyond the energy interpretation in Remark~2, the observability Gramian also governs estimation conditioning: in finite-horizon least-squares estimation, the sensitivity of the estimation error scales with the Gramian inverse \cite{carnevale2013nonlinear}, and more generally, directions that are weakly observable correspond to increased reconstruction uncertainty across estimation frameworks \cite{bopardikar,manohar}. An alternative measure of unobservability is the observability radius \cite{bianchin}, which quantifies the smallest perturbation of $(A,C)$ that renders the system unobservable. However, our objective is not merely to certify proximity to rank loss, but to continuously degrade an adversary’s state-reconstruction capability under a hard performance constraint. Gramian-based metrics provide smooth measures of estimation conditioning \cite{carnevale2013nonlinear}, which makes them the appropriate choice in this setting.
\end{remark}

Notably, to make the pair $(A+BK,~C)$ as unobservable as possible, the operator would typically aim to either maximize $\mathrm{tr}(W^{-1}V^{-1})$ or minimize $\mathrm{tr}(WV)$. The difference between these two metrics lies in the aspect of observability they emphasize. The quantity $\mathrm{tr}(W^{-1}V^{-1})$ is better suited for capturing how close the pair $(A+BK,~C)$ is to being unobservable, as it becomes unbounded (infinite) if the system is not observable. This makes it particularly effective when the objective is to suppress the most weakly observable modes. On the other hand, $\mathrm{tr}(WV)$ is better suited for quantifying observability on average, across all directions of the state space. It remains finite even when $(A+BK,~C)$ is unobservable, and is thus a preferable metric when one does not place much emphasis on complete unobservability, or when a well-defined metric is needed in both observable and unobservable regimes.

In view of the discussion above, to co-optimize the performance of the closed-loop plant matrix $A+BK$ as well as the unobservability of the pair $(A+BK,~C)$, we focus on two optimization problems. The first problem minimizes the first unobservability objective in \eqref{eq:obsfuns}, subject to a constraint that limits the degradation in performance, as quantified by the cost \eqref{eq:LQint} and a trade-off parameter $\lambda > 0$. Specifically, this constraint is $J_s(K) - J_s(K^\star) \le \lambda$, or equivalently
\begin{equation*}
\mathrm{tr}((P-P^\star)V)\le \lambda,
\end{equation*}
where $K^\star=-R^{-1}B^\mathrm{T}P^\star$, and $P^\star\succ0$ is the positive definite solution of the algebraic Riccati equation
\begin{equation}\label{eq:realARE}
A^\mathrm{T}P^\star+P^\star A+Q-P^\star BR^{-1}B^\mathrm{T}P^\star=0.
\end{equation}
We summarize the overall problem as follows.
\begin{problem}\label{pr:tr}
Solve the optimization problem:
\begin{equation}\nonumber
\begin{split}
\min_{K,P,W} \quad &\mathrm{tr}(WV)\\
\textrm{s.t.}\quad &(A+BK)^\mathrm{T}P+P(A+BK)+Q+K^\mathrm{T}RK=0,\\
&(A+BK)^\mathrm{T}W+W(A+BK)+C^\mathrm{T}C=0,\\
&\mathrm{tr}((P-P^\star)V)\le \lambda,~P\succeq0,~W\succeq0.
\end{split}
\end{equation}
\end{problem}

The second problem of co-optimizing \eqref{eq:LQint} and the second observability metric in \eqref{eq:obsfuns} is more intricate. The main reason is that when the pair $(A+BK,C)$ is not observable, the quantity $\mathrm{tr}(W^{-1}V^{-1})$ becomes undefined due to the observability Gramian $W$ being singular. This technicality commonly appears in controllability and observability quantification problems, and is bypassed by regularizing the observability Gramian with a small positive definite matrix \cite{guo2020actuator}. Hereon, we adopt this standard regularization approach, but instead of applying the regularization directly to $W$ after its computation, we apply it at the level of the  Lyapunov equation \eqref{eq:lyapW}, from which $W$ is obtained. As we will see later, this choice is crucial for obtaining convergent algorithmic solutions. We thus obtain the following alternative optimization for designing $K$.

\begin{problem}\label{pr:trinv}
Select $\epsilon>0$ and solve the optimization problem:
\begin{equation}\nonumber
\begin{split}
\min_{K,P,W_\epsilon} \quad &-\mathrm{tr}(W_\epsilon^{-1}V^{-1})\\
\textrm{s.t.}\quad &(A+BK)^\mathrm{T}P+P(A+BK)+Q+K^\mathrm{T}RK=0,\\
&(A+BK)^\mathrm{T}W_\epsilon+W_\epsilon(A+BK)+C^\mathrm{T}C+\epsilon I=0,\\
& \mathrm{tr}((P-P^\star)V)\le \lambda, ~P\succeq0,~W_\epsilon\succeq0.
\end{split}
\end{equation}
\end{problem}
\begin{remark}
Per \cite{guo2020actuator}, one should select $\epsilon$ to be close to $0$.
\end{remark}

Both Problem \ref{pr:tr} and \ref{pr:trinv} involve bilinear constraints. Problem \ref{pr:trinv} is further complicated by its nonlinear cost function.
We will provide lower bounds to their values that showcase the fundamental limits of the trade-off between adversarial observability and performance, as it relates to $\lambda$. Moreover, we will derive an algorithm that exactly solves Problem \ref{pr:tr} using an equivalent SDP, and an algorithm that approximately solves Problem \ref{pr:trinv} through sequential SDP.

\section{Adversarial Observability and Performance Trade-offs}\label{sec:tradeoff}

In this section, we study the adversarial observability and performance trade-off in Problems \ref{pr:tr} and \ref{pr:trinv}. We specifically provide lower bounds to the values of these problems as a function of the trade-off parameter $\lambda$, and study their robustness to uncertainty in the adversary's sensing matrix.

\subsection{Quantitative Lower Bounds}

Define the matrices $Z^\star, ~S^\star, ~U^\star\succ0$ that solve
\begin{align}\label{eq:ZLE}
&(A+BK^\star)Z^\star+Z^\star(A+BK^\star)^\tr +V=0,\\
&(A+BK^\star)S^\star+S^\star(A+BK^\star)^\tr +I=0,\label{eq:SLE}\\
&(A+BK^\star)^\tr U^\star+ U^\star(A+BK^\star) +I=0.
\end{align}
We present an intermediate result that bounds the norm of the gain distance $\tilde{K}:=K-K^\star$, where $K$ is the solution to either Problem \ref{pr:tr} or \ref{pr:trinv}, and $K^\star$ is the optimal gain.

\begin{lemma}\label{le:aux}
It holds that $\|\tilde{K}\|\le f(\lambda)$, where
\begin{equation*}
f(\lambda):=\frac{\lambda\|Z^\star\|\|{V^{-1}}\| \norm{B}   }{\underline{\lambda}(Z^\star)\underline{\lambda}(R)}\left(1+\sqrt{1+\frac{\underline{\lambda}(Z^\star)\underline{\lambda}(R)}{\lambda\|Z^\star\|^2\|{V^{-1}}\|^2 \norm{B}^2}} \right).
\end{equation*}
\end{lemma}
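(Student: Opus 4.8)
The plan is to first characterize the suboptimality gap $\mathrm{tr}((P-P^\star)V)$ exactly in terms of $\tilde{K}$, and then turn the constraint $\mathrm{tr}((P-P^\star)V)\le\lambda$ into a quadratic inequality in $\norm{\tilde{K}}$. Write $\Delta P:=P-P^\star$. Substituting $K=K^\star+\tilde{K}$ and $K^\star=-R^{-1}B^\tr P^\star$ into \eqref{eq:LE}, and subtracting the Riccati equation \eqref{eq:realARE} written in the closed-loop form $(A+BK^\star)^\tr P^\star+P^\star(A+BK^\star)+Q+(K^\star)^\tr RK^\star=0$, I expect the indefinite cross terms to cancel and leave the clean Lyapunov identity
\[
(A+BK)^\tr\Delta P+\Delta P(A+BK)+\tilde{K}^\tr R\tilde{K}=0 .
\]
Since $A+BK$ is Hurwitz and $\tilde{K}^\tr R\tilde{K}\succeq0$, this yields $\Delta P\succeq0$, the familiar representation of the LQR performance gap.

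The crucial step is to express the constraint through the optimal Gramian $Z^\star$ of \eqref{eq:ZLE} rather than the $K$-dependent Gramian solving $(A+BK)Z+Z(A+BK)^\tr+V=0$. I would substitute $V=-\big((A+BK^\star)Z^\star+Z^\star(A+BK^\star)^\tr\big)$ into $\mathrm{tr}(\Delta P V)$ and simplify using the cyclic property of the trace, the identity above, and $A+BK^\star=(A+BK)-B\tilde{K}$. This should produce
\[
\mathrm{tr}(\Delta P V)=\mathrm{tr}(\tilde{K}^\tr R\tilde{K}Z^\star)+2\,\mathrm{tr}(\Delta P B\tilde{K}Z^\star).
\]
The first term is bounded below by $\underline{\lambda}(R)\underline{\lambda}(Z^\star)\norm{\tilde{K}}^2$. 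For the second, I would use that $\Delta P\succeq0$ implies $|\mathrm{tr}(\Delta P X)|\le\norm{X}\,\mathrm{tr}(\Delta P)$ for any $X$, together with the a priori estimate $\mathrm{tr}(\Delta P)\le\mathrm{tr}(\Delta P V)/\underline{\lambda}(V)\le\lambda\norm{V^{-1}}$, to obtain $|\mathrm{tr}(\Delta P B\tilde{K}Z^\star)|\le\lambda\norm{Z^\star}\norm{V^{-1}}\norm{B}\,\norm{\tilde{K}}$.

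Combining these with the constraint $\mathrm{tr}(\Delta P V)\le\lambda$ gives
\[
\underline{\lambda}(R)\underline{\lambda}(Z^\star)\norm{\tilde{K}}^2-2\lambda\norm{Z^\star}\norm{V^{-1}}\norm{B}\,\norm{\tilde{K}}-\lambda\le0 .
\]
This is an upward-opening quadratic in $\norm{\tilde{K}}$ whose value at the origin is $-\lambda<0$, so it has exactly one positive root; checking via the quadratic formula that this root equals $f(\lambda)$ is routine, and the inequality then forces $\norm{\tilde{K}}\le f(\lambda)$.

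The main obstacle I anticipate is the second step. The naive identity $\mathrm{tr}(\Delta P V)=\mathrm{tr}(\tilde{K}^\tr R\tilde{K}Z)$ involves the closed-loop Gramian $Z$, whose smallest eigenvalue depends on the unknown $K$ and would make any resulting bound circular. Re-expressing everything through $Z^\star$ and absorbing the discrepancy into the controllable cross term $2\,\mathrm{tr}(\Delta P B\tilde{K}Z^\star)$ is precisely what breaks this circularity, at the cost of the linear-in-$\norm{\tilde{K}}$ correction that appears in $f(\lambda)$.
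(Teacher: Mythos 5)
Your proposal is correct and follows essentially the same route as the paper: the same Lyapunov identity $(A+BK)^\tr\tilde{P}+\tilde{P}(A+BK)+\tilde{K}^\tr R\tilde{K}=0$, the same key relation $\mathrm{tr}(\tilde{P}V)=\mathrm{tr}(\tilde{K}^\tr R\tilde{K}Z^\star)+2\,\mathrm{tr}(\tilde{P}B\tilde{K}Z^\star)$, the same bounds $\mathrm{tr}(\tilde{P})\le\lambda\|V^{-1}\|$ and $|\mathrm{tr}(\tilde{P}X)|\le\|X\|\mathrm{tr}(\tilde{P})$, and the same quadratic inequality. The only cosmetic difference is that you obtain the trace identity by substituting the $Z^\star$ Lyapunov equation and cycling traces, whereas the paper passes through the integral representation of $\tilde{P}$ along $e^{(A+BK^\star)t}$; these are algebraically equivalent.
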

\begin{proof}
 From first constraint of Problem \ref{pr:tr} or \ref{pr:trinv} and from \eqref{eq:realARE}:
\begin{align*}
&(A+BK)^\mathrm{T}P+P(A+BK)+Q+K^\mathrm{T}RK=0,\\
&(A+BK)^\mathrm{T}P^\star+P^\star(A+BK)\\&\qquad\qquad\qquad+Q+K^{\star^\mathrm{T}}RK^\star-\tilde{K}^\tr B^\tr P^\star-P^\star B\tilde{K}=0.
\end{align*}
Subtracting these two equations, using the property $B^\tr P^\star=-RK^\star$, and defining $\tilde{P}=P-P^\star$, we get
\begin{equation}\nonumber
(A+BK)^\mathrm{T}\tilde{P}+\tilde{P}(A+BK)+\tilde{K}^\tr R\tilde{K}=0,
\end{equation}
and, by adding and subtracting identical terms:
\begin{equation}\nonumber
(A+BK^\star)^\mathrm{T}\tilde{P}+\tilde{P}(A+BK^\star)+\tilde{K}^\tr R\tilde{K}+\tilde{K}^\tr B^\tr\tilde{P}+\tilde{P}B\tilde{K}=0.
\end{equation}
The implicit solution to this equation is given by:
\begin{equation*}
\tilde{P}=\int_0^\infty e^{(A+BK^\star)^\mathrm{T}t}(\tilde{K}^\tr R\tilde{K}+\tilde{K}^\tr B^\tr\tilde{P}+\tilde{P}B\tilde{K})e^{(A+BK^\star)t}\textrm{d}t.
\end{equation*} 
Using the constraint $\mathrm{tr}((P-P^\star)V)=\mathrm{tr}(\tilde{P}V)\le \lambda$, this implies
\begin{align}\nonumber
\mathrm{tr}(\tilde{P}V) &= \mathrm{tr}\Big({\int_0^\infty} e^{(A+BK^\star)^\mathrm{T}t}(\tilde{K}^\tr R\tilde{K}\\\nonumber&\qquad\qquad\qquad+\tilde{K}^\tr B^\tr\tilde{P}+\tilde{P}B\tilde{K})e^{(A+BK^\star)t}\textrm{d}tV\Big)\\&\nonumber= \mathrm{tr}\Big({\int_0^\infty} e^{(A+BK^\star)t}Ve^{(A+BK^\star)^\mathrm{T}t}\textrm{d}t (\tilde{K}^\tr R\tilde{K}\\&\nonumber\qquad\qquad\qquad+\tilde{K}^\tr B^\tr\tilde{P}+\tilde{P}B\tilde{K})\Big)\\&\nonumber=\mathrm{tr}(Z^\star(\tilde{K}^\tr R\tilde{K}+\tilde{K}^\tr B^\tr\tilde{P}+\tilde{P}B\tilde{K}))\\&=\mathrm{tr}(Z^\star\tilde{K}^\tr R\tilde{K})+2\mathrm{tr}(\tilde{P}B\tilde{K}Z^\star)\le \lambda,\label{eq:th1c}
\end{align}
where $Z^\star$ solves the Lyapunov equation \eqref{eq:ZLE}.
Note that since $P^\star$ is the ARE solution \eqref{eq:realARE}, it follows that $P^\star\preceq P$, hence  $\tilde{P}\succeq0$. Therefore, using Fact 8.12.28 from \cite{bernstein2009matrix}, \eqref{eq:th1c} yields
\begin{equation}\label{eq:th1d}
\mathrm{tr}(Z^\star\tilde{K}^\tr R\tilde{K})-2\mathrm{tr}(\tilde{P})\|B\tilde{K}Z^\star \|\le \lambda.
\end{equation}
Moreover, we have 
\begin{equation*}
\mathrm{tr}(\tilde{P})=\mathrm{tr}(V^{1/2}\tilde{P}V^{1/2}V^{-1})\le \mathrm{tr}(\tilde{P}V)\|V^{-1}\|\le \lambda \|V^{-1}\|. 
\end{equation*}
Hence, \eqref{eq:th1d} yields
\begin{equation*}
\underline{\lambda}(Z^\star)\underline{\lambda}(R)\|\tilde{K}\|^2-2\lambda\|Z^\star\|\|{V^{-1}}\| \norm{B}\|\tilde{K}\|-\lambda\le0.
\end{equation*}
Solving this quadratic inequality we obtain
\begin{align*}
\|\tilde{K}\|&\le \frac{\lambda\|Z^\star\|\|{V^{-1}}\| \norm{B}   }{\underline{\lambda}(Z^\star)\underline{\lambda}(R)}\left(1+\sqrt{1+\frac{\underline{\lambda}(Z^\star)\underline{\lambda}(R)}{\lambda\|Z^\star\|^2\|{V^{-1}}\|^2 \norm{B}^2}} \right),
\end{align*}
which is the required result. \frQED
\end{proof}

Next, denote the value of Problem \ref{pr:tr} with respect to $\lambda$ as $J_1(\lambda)$, and that of Problem \ref{pr:trinv} as $J_2(\lambda)$. Using Lemma \ref{le:aux}, we provide lower bounds to the values of these problems as functions of the observability-performance trade-off parameter $\lambda$.

\begin{theorem}\label{th:lb}
The following hold true.
\begin{enumerate}
\item For all $\lambda\ge0$:
\begin{equation}\label{eq:trade1}
J_1(\lambda)\ge \frac{1}{1+2\mathrm{tr}(Z^\star)f(\lambda)\|V^{-1}\|\norm{B}}J_1(0),
\end{equation}
\item There exists $\lambda^\star>0$ such that for all $\lambda\le \lambda^\star$ 
\begin{equation}\label{eq:trade2}
J_2(\lambda)\ge \frac{1}{1-\frac{2\kappa(V)f(\lambda)\norm{V}\norm{B}\|U^\star\|\mathrm{tr}(W_\epsilon^0)\mathrm{tr}((W^0_\epsilon)^{-1})}{1-2\mathrm{tr}(S^\star)f(\lambda)\|B\|}}J_2(0),
\end{equation}
where $W_\epsilon^0$ corresponds to $W_\epsilon$ under $K=K^\star$.
\end{enumerate}
\end{theorem}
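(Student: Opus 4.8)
The plan is to treat both parts through a common perturbation analysis around the optimal gain $K^\star$, and then to handle the trace and the inverse-trace objectives separately. First I would record the base case: at $\lambda=0$ the performance constraint reads $\mathrm{tr}(\tilde P V)\le 0$, and since $\tilde P = P-P^\star\succeq 0$ (as established in the proof of Lemma~\ref{le:aux}) and $V\succ0$, this forces $\tilde P=0$ and hence $K=K^\star$. Consequently $J_1(0)=\mathrm{tr}(W^\star V)$ and $J_2(0)=-\mathrm{tr}((W_\epsilon^0)^{-1}V^{-1})$, where $W^\star,W_\epsilon^0$ are the (regularized) Gramians under $K^\star$. Writing $A_\star:=A+BK^\star$ and subtracting the base Lyapunov equation from the closed-loop one, the deviation $\tilde W:=W-W^\star$ (resp.\ $\tilde W_\epsilon:=W_\epsilon-W_\epsilon^0$) satisfies $A_\star^\tr\tilde W+\tilde W A_\star+\tilde K^\tr B^\tr W+WB\tilde K=0$, whose implicit solution, integrated against $V$ (resp.\ $I$) and combined with the definitions of $Z^\star$ and $S^\star$ in \eqref{eq:ZLE}--\eqref{eq:SLE}, turns the change in the objective into a single cross-term trace. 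Throughout I would use Lemma~\ref{le:aux} to replace $\norm{\tilde K}$ by $f(\lambda)$.

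For part (1) this cross-term identity is exact: $J_1(\lambda)-J_1(0)=2\,\mathrm{tr}(Z^\star WB\tilde K)$. I would then use $|\mathrm{tr}(WM)|\le \mathrm{tr}(WV)\norm{V^{-1}}\norm{M}$ for any $M$ — obtained by writing $W=V^{-1/2}(V^{1/2}WV^{1/2})V^{-1/2}$ and applying $\mathrm{tr}(XN)\le\mathrm{tr}(X)\norm{N}$ for $X\succeq0$ (Fact~8.12.28 of \cite{bernstein2009matrix}) — with $M=B\tilde K Z^\star$, followed by $\norm{M}\le\norm{B}f(\lambda)\norm{Z^\star}$ and $\norm{Z^\star}\le\mathrm{tr}(Z^\star)$. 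This yields $J_1(0)-J_1(\lambda)\le 2J_1(\lambda)\mathrm{tr}(Z^\star)f(\lambda)\norm{V^{-1}}\norm{B}$, which rearranges into \eqref{eq:trade1}.

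For part (2) I would proceed in three stages. Stage one is a self-referential trace bound: taking the trace of the deviation equation gives $\mathrm{tr}(\tilde W_\epsilon)=2\,\mathrm{tr}(S^\star W_\epsilon B\tilde K)$, which I bound by $2\,\mathrm{tr}(S^\star)\mathrm{tr}(W_\epsilon)\norm{B}f(\lambda)$ (using $\mathrm{tr}(W_\epsilon N)\le\mathrm{tr}(W_\epsilon)\norm{N}$ and $\norm{S^\star}\le\mathrm{tr}(S^\star)$); since $\mathrm{tr}(W_\epsilon)=\mathrm{tr}(W_\epsilon^0)+\mathrm{tr}(\tilde W_\epsilon)$, solving for $\mathrm{tr}(W_\epsilon)$ gives $\mathrm{tr}(W_\epsilon)\le\mathrm{tr}(W_\epsilon^0)/D$ with $D:=1-2\mathrm{tr}(S^\star)f(\lambda)\norm{B}$, which is the inner denominator and is positive precisely when $\lambda$ is small enough — this is what defines $\lambda^\star$. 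Stage two bounds $\norm{\tilde W_\epsilon}$: since the deviation equation is solved by the monotone Lyapunov map $X\mapsto\int_0^\infty e^{A_\star^\tr t}Xe^{A_\star t}\,\mathrm{d}t$ sending $I$ to $U^\star$, a symmetric source of norm at most $2\norm{B}f(\lambda)\norm{W_\epsilon}$ gives $\norm{\tilde W_\epsilon}\le 2\norm{U^\star}\norm{B}f(\lambda)\mathrm{tr}(W_\epsilon^0)/D$. Stage three invokes the inverse-perturbation identity $W_\epsilon^{-1}-(W_\epsilon^0)^{-1}=-W_\epsilon^{-1}\tilde W_\epsilon(W_\epsilon^0)^{-1}$, so that with $G:=\mathrm{tr}(W_\epsilon^{-1}V^{-1})=-J_2(\lambda)$ one has $G-G_0=-\mathrm{tr}(W_\epsilon^{-1}\tilde W_\epsilon(W_\epsilon^0)^{-1}V^{-1})$; bounding this by $\norm{\tilde W_\epsilon}$ times the remaining factors (extracting $\mathrm{tr}((W_\epsilon^0)^{-1})$ and a factor $\kappa(V)\norm{V}$ while retaining one factor of $G$) yields $G-G_0\le(N/D)\,G$ with $N$ the stated numerator, which rearranges to $G\le G_0/(1-N/D)$ and hence to \eqref{eq:trade2}.

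The main obstacle is Stage three of part (2): the change in the inverse-trace objective couples $W_\epsilon^{-1}$ (the unknown minimizer's Gramian) with the deviation $\tilde W_\epsilon$, so one must bound $\mathrm{tr}(W_\epsilon^{-1}\tilde W_\epsilon(W_\epsilon^0)^{-1}V^{-1})$ so that only $W_\epsilon^0$-quantities, $\kappa(V)$, and a single residual factor of $G$ survive — that residual being what enables the final self-bound. Matching the exact constants (the appearance of $\kappa(V)\norm{V}$ and of the product $\mathrm{tr}(W_\epsilon^0)\mathrm{tr}((W_\epsilon^0)^{-1})$) requires careful use of the ordering $-\norm{\tilde W_\epsilon}I\preceq\tilde W_\epsilon\preceq\norm{\tilde W_\epsilon}I$ together with the spectral relations between $\mathrm{tr}(W_\epsilon^{-1}V^{-1})$ and $\mathrm{tr}(W_\epsilon^{-1})$. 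This self-referential structure — the bound on $\tilde W_\epsilon$ depending on $W_\epsilon$, and the inverse-trace bound depending on $G$ — is precisely what confines part (2) to the regime $\lambda\le\lambda^\star$, in which both $D>0$ and $1-N/D>0$.
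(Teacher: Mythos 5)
Your proposal is correct and takes essentially the same approach as the paper's proof: both perturb the Gramian Lyapunov equations about $A+BK^\star$, use Lemma~\ref{le:aux} to replace $\norm{\tilde{K}}$ by $f(\lambda)$, integrate the perturbations against $Z^\star$, $S^\star$, $U^\star$, and, for part~2, combine the Woodbury (inverse-perturbation) identity with the same two self-referential trace bounds that produce the nested denominator. The differences are cosmetic: you make the $\lambda=0$ base case $K=K^\star$ explicit, you reorder the three estimates of part~2, and your constant bookkeeping (e.g., $\kappa(V)$ versus $\kappa(V)\norm{V}$, and $\mathrm{tr}((W^0_\epsilon)^{-1})$ versus $\mathrm{tr}((W^0_\epsilon)^{-1}V^{-1})$) matches the stated bound at least as closely as the paper's own final display does.
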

\begin{proof}
To prove item 1, the second constraint of Problem \ref{pr:tr} yields
\begin{equation*}
(A+BK^\star)^\mathrm{T}W+W(A+BK^\star)+C^\mathrm{T}C+WB\tilde{K}+\tilde{K}^{\mathrm{T}}B^\tr W=0.
\end{equation*}
Therefore, denoting the Gramian in Problem \ref{pr:tr} resulting with $\lambda=0$ as $W^0$, we obtain
\begin{equation*}
W=W^0+\int_0^\infty e^{(A+BK^\star)^\mathrm{T}t}(WB\tilde{K}+\tilde{K}^{\mathrm{T}}B^\tr W)e^{(A+BK^\star)t}\mathrm{d}t.
\end{equation*}
Applying traces to each side of this equality: 
\begin{align}\nonumber
\mathrm{tr}&(WV)=\mathrm{tr}(W^0V)\\&\nonumber+\mathrm{tr}\left(\int_0^\infty e^{(A+BK^\star)t}Ve^{(A+BK^\star)^\mathrm{T}t}\textrm{d}t(WB\tilde{K}+\tilde{K}^{\mathrm{T}}B^\tr W) \right)\\&=\mathrm{tr}(W^0V)+\mathrm{tr}(Z^\star(WB\tilde{K}+\tilde{K}^{\mathrm{T}}B^\tr W)).\label{eq:similar}
\end{align}
 Using Fact 8.12.28 from \cite{bernstein2009matrix} and Lemma \ref{le:aux}
 \begin{align}\nonumber
\mathrm{tr}(WV)&\ge\mathrm{tr}(W^0V)-2\mathrm{tr}(Z^\star)\norm{W}\norm{B}\|\tilde{K}\|\\\nonumber&\ge\mathrm{tr}(W^0V)-2\mathrm{tr}(Z^\star)f(\lambda)\mathrm{tr}(WV)\|V^{-1}\|\norm{B}.
 \end{align}
Rearranging terms, we get  $\mathrm{tr}(WV)\ge \mathrm{tr}(W^0V)/(1+2\mathrm{tr}(Z^\star)f(\lambda)\|V^{-1}\|\norm{B})$, which proves item 1.

 To prove item 2, because of the second constraint of Problem \ref{pr:trinv}, we have similarly that $W_\epsilon=W^0_\epsilon+X,$
where $W_{\epsilon}^0$ denotes the Gramian in Problem \ref{pr:trinv} for $\lambda=0$, and $X=\int_0^\infty e^{(A+BK^\star)^\mathrm{T}t}(W_\epsilon B\tilde{K}+\tilde{K}^{\mathrm{T}}B^\tr  W_\epsilon)e^{(A+BK^\star)t}\mathrm{d}t$. Using the Woodbury identity \cite{henderson1981deriving} we have
$W_\epsilon^{-1}V^{-1}=(W^0_\epsilon)^{-1}V^{-1}-(W^0_\epsilon)^{-1}XW_\epsilon^{-1}V^{-1}$, hence
\begin{align}\nonumber
\mathrm{tr}(W_\epsilon^{-1}&V^{-1})=\mathrm{tr}((W^0_\epsilon)^{-1}V^{-1})-\mathrm{tr}((W^0_\epsilon)^{-1}XW_\epsilon^{-1}V^{-1})
\\&\nonumber\le \mathrm{tr}((W^0_\epsilon)^{-1}V^{-1})+\kappa(V)\mathrm{tr}((W^0_\epsilon)^{-1}V^{-1})\|XW_\epsilon^{-1}\|
\\& \le (1+\kappa(V)\|W_\epsilon^{-1}\|\norm{X})\mathrm{tr}((W^0_\epsilon)^{-1}V^{-1}).\label{eq:th1temp}
\end{align}
Moreover, by the definition of $X$, we have $X\preceq 2\|\tilde{K}\|\norm{B}\norm{W_\epsilon}U^\star\preceq 2f(\lambda)\norm{B}\norm{W_\epsilon}U^\star$. Combining this with \eqref{eq:th1temp} yields
\begin{align*}
&\mathrm{tr}(W_\epsilon^{-1}V^{-1})\\& \le (1{+}2\kappa(V)f(\lambda)\norm{B}\norm{W_\epsilon}\|W_\epsilon^{-1}\|\|U^\star\|)\mathrm{tr}((W^0_\epsilon)^{-1}V^{-1})
\\&\le (1+2\kappa(V)f(\lambda)\norm{B}\mathrm{tr}(W_\epsilon)\mathrm{tr}(W_\epsilon^{-1})\|U^\star\|)\mathrm{tr}((W^0_\epsilon)^{-1}V^{-1}).
\end{align*}
Regrouping terms:
\begin{multline}
\mathrm{tr}(W_\epsilon^{-1}V^{-1})\\\le \frac{\mathrm{tr}((W_\epsilon^0)^{-1}V^{-1})}{1-2\kappa(V)f(\lambda)\norm{V}\norm{B}\|U^\star\|\mathrm{tr}(W_\epsilon)\mathrm{tr}((W^0_\epsilon)^{-1}V^{-1})}.\label{eq:th1end1}
\end{multline}
Finally, similar to \eqref{eq:similar} in item 1 we have
\begin{align*}
\mathrm{tr}(W_\epsilon )&=\mathrm{tr}(W^0_\epsilon)+\mathrm{tr}(S^\star(W_\epsilon B\tilde{K}+\tilde{K}^{\mathrm{T}}B^\tr W_\epsilon)))
\\&\le \mathrm{tr}(W^0_\epsilon)+2\mathrm{tr}(S^\star)\|W_\epsilon B\tilde{K}\|
\\&\le \mathrm{tr}(W^0_\epsilon)+2\mathrm{tr}(S^\star)f(\lambda)\|B\|\mathrm{tr}(W_\epsilon),
\end{align*}
where we used $\|\tilde{K}\|\le f(\lambda)$. Therefore
\begin{equation}\label{eq:th1end2}
\mathrm{tr}(W_\epsilon)\le \frac{1}{1-2\mathrm{tr}(S^\star)f(\lambda)\|B\|}\mathrm{tr}(W_\epsilon^0).
\end{equation}
Combining equations \eqref{eq:th1end1}-\eqref{eq:th1end2} yields
\begin{equation}\nonumber
\mathrm{tr}(W_\epsilon^{-1}V^{-1})\le \frac{\mathrm{tr}((W_\epsilon^0)^{-1}V^{-1})}{1{-}\frac{2\kappa(V)f(\lambda)\norm{V}\norm{B}\|U^\star\|\mathrm{tr}(W_\epsilon^0)\mathrm{tr}((W^0_\epsilon)^{-1}V^{-1})}{1-2\mathrm{tr}(S^\star)f(\lambda)\|B\|}},
\end{equation}
which is equivalent to the final result. \frQED
\end{proof}

Note that $f(\lambda)\sim O(\lambda)$ for large $\lambda$ and $f(\lambda)\sim O(\sqrt{\lambda})$ for small $\lambda$. Looking at \eqref{eq:trade1}, this implies that small performance losses $\lambda$ yield steeper trade-offs in reducing adversarial observability for the cost \eqref{eq:trade1}. On the other hand, the lower bound in  Problem \ref{pr:trinv} can become unbounded as $\lambda$ increases, since a larger $\lambda$ allows the Gramian inverse to become singular. As a result, higher values of $\lambda$ can produce sharper improvements in the objective of Problem~\ref{pr:trinv}.  An illustration of the function $f(\lambda)$ is provided in Figure \ref{fig:fl} for the first example of the simulations in Section \ref{sec:sim}. 

The lower bounds also depend on a number of other parameters. Most notable is $\norm{B}$, which dictates that larger actuation authority for the operator allows for sharper decrease in adversarial observability.  They also depend on $V$, which captures the interplay between observability minimization and initial condition covariance.

Next, note that although the lower bound of Problem \ref{pr:trinv} is local, the objective value of Problem \ref{pr:trinv} remains globally bounded due to the regularization parameter $\epsilon$. Consequently, we can extend the lower bound to hold globally as follows.
\begin{proposition}\label{prop:loose}
    For all $\lambda\ge0$:
    \begin{equation*}
        J_2(\lambda)\ge -\frac{2\mathrm{tr}(V^{-1})(\norm{A+BK^\star}+\norm{B}f(\lambda))}{\epsilon}
    \end{equation*}
\end{proposition}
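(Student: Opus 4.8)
The plan is to observe that, since $J_2(\lambda)$ is the \emph{minimum} of $-\mathrm{tr}(W_\epsilon^{-1}V^{-1})$ over the feasible set, establishing the stated lower bound is equivalent to producing a uniform upper bound
$\mathrm{tr}(W_\epsilon^{-1}V^{-1})\le \frac{2\mathrm{tr}(V^{-1})(\norm{A+BK^\star}+\norm{B}f(\lambda))}{\epsilon}$
valid for \emph{every} feasible $K$. I would first decouple the two inverses using the standard positive-semidefinite trace inequality, $\mathrm{tr}(W_\epsilon^{-1}V^{-1})\le \norm{W_\epsilon^{-1}}\mathrm{tr}(V^{-1})=\mathrm{tr}(V^{-1})/\underline{\lambda}(W_\epsilon)$. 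This reduces the whole claim to a single lower bound on the smallest eigenvalue of the regularized Gramian $W_\epsilon$.

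The crux is this eigenvalue estimate, and it is exactly where the regularizer $\epsilon I$ earns its place. Let $v$ be a unit eigenvector of $W_\epsilon$ associated with $\underline{\lambda}(W_\epsilon)$. Pre- and post-multiplying the second constraint of Problem \ref{pr:trinv} by $v^\tr$ and $v$, and using $W_\epsilon v=\underline{\lambda}(W_\epsilon)v$ together with the symmetry of the resulting scalar quadratic form, collapses the two Lyapunov terms into $2\underline{\lambda}(W_\epsilon)\,v^\tr(A+BK)v$. This gives the scalar identity $2\underline{\lambda}(W_\epsilon)\,v^\tr(A+BK)v+\norm{Cv}^2+\epsilon=0$, whence $-v^\tr(A+BK)v\ge \epsilon/(2\underline{\lambda}(W_\epsilon))>0$. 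Bounding the quadratic form of the unit vector $v$ by the operator norm, $-v^\tr(A+BK)v\le \norm{A+BK}$, then yields the desired $\underline{\lambda}(W_\epsilon)\ge \epsilon/(2\norm{A+BK})$.

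With this in hand the remaining steps are routine. Substituting back produces $\mathrm{tr}(W_\epsilon^{-1}V^{-1})\le 2\mathrm{tr}(V^{-1})\norm{A+BK}/\epsilon$, and I would finish by applying the triangle inequality $\norm{A+BK}\le \norm{A+BK^\star}+\norm{B}\norm{\tilde{K}}$ together with the gain-perturbation bound $\norm{\tilde{K}}\le f(\lambda)$ from Lemma \ref{le:aux}. Negating both sides yields the claimed bound on $J_2(\lambda)$, which holds for \emph{all} $\lambda\ge0$ precisely because the argument never required $\lambda$ to be small, unlike the local bound \eqref{eq:trade2}. The main obstacle is the eigenvalue estimate of the second paragraph: the temptation is to bound $W_\epsilon$ through its integral representation, but that entangles $K$ inside the matrix exponential; the eigenvector identity instead lets the additive $\epsilon I$ lower-bound the spectrum of $W_\epsilon$ cleanly in terms of $\norm{A+BK}$ alone.
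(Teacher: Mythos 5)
Your proposal is correct, and it reaches the paper's key intermediate inequality $\underline{\lambda}(W_\epsilon)\ge \epsilon/(2\norm{A+BK})$ by a genuinely different argument, even though the surrounding skeleton (reduce to a lower bound on $\underline{\lambda}(W_\epsilon)$, apply the trace inequality $\mathrm{tr}(W_\epsilon^{-1}V^{-1})\le\mathrm{tr}(V^{-1})/\underline{\lambda}(W_\epsilon)$, then the triangle inequality and Lemma \ref{le:aux}) is identical to the paper's. The paper's route is integral: it writes $W_\epsilon=\int_0^\infty e^{(A+BK)^\tr t}(C^\tr C+\epsilon I)e^{(A+BK)t}\,\mathrm{d}t$, drops $C^\tr C$, and lower-bounds the exponential via $\|e^{(A+BK)t}v\|\ge e^{-\norm{A+BK}t}$ for unit $v$, giving $W_\epsilon\succeq \frac{\epsilon}{2\norm{A+BK}}I$; note this works despite your stated worry that the integral representation ``entangles $K$ inside the matrix exponential'' --- the operator-norm bound on $e^{(A+BK)t}$ disentangles it cleanly, so your closing remark mischaracterizes that route as a dead end when it is in fact the paper's own. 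Your route is algebraic: evaluating the quadratic form of the regularized Lyapunov constraint at a unit eigenvector $v$ for $\underline{\lambda}(W_\epsilon)$ collapses it to $2\underline{\lambda}(W_\epsilon)\,v^\tr(A+BK)v+\norm{Cv}^2+\epsilon=0$, whence the eigenvalue bound. This is more elementary and slightly more self-contained: it needs only that some $W_\epsilon\succeq0$ satisfies the constraint, with no appeal to $A+BK$ being Hurwitz, which the paper's integral representation implicitly requires (that fact is established separately in Lemma \ref{le:relax_pr2}); one small point you should make explicit is that the scalar identity itself forces $\underline{\lambda}(W_\epsilon)>0$ (otherwise $\norm{Cv}^2+\epsilon=0$, impossible), which licenses the division by $\underline{\lambda}(W_\epsilon)$. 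What the paper's approach buys in exchange is uniformity: the same integral-representation machinery is used throughout Lemma \ref{le:aux} and Theorem \ref{th:lb}, so Proposition \ref{prop:loose} reads as one more instance of it.
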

\begin{proof}
From the second constraint of Problem \ref{pr:trinv} we obtain
\begin{align*}
W_\epsilon&=\int_0^\infty e^{(A+BK)^\mathrm{T}t}(C^\mathrm{T}C+\epsilon I)e^{(A+BK)t}\textrm{d}t\\&\ge \epsilon\int_0^\infty e^{(A+BK)^\mathrm{T}t}e^{(A+BK)t}\textrm{d}t\\&\ge\epsilon\int_0^\infty e^{-2\norm{A+BK}t}\textrm{d}t=\frac{\epsilon}{2\norm{A+BK}},
\end{align*}
hence $\underline{\lambda}(W_\epsilon)\ge \frac{\epsilon}{2\norm{A+BK}}$. Therefore:
\begin{multline}\label{eq:loose1}
\mathrm{tr}(W_\epsilon^{-1}V^{-1})\le \mathrm{tr}(V^{-1})\bar{\lambda}(W_\epsilon^{-1})\\=\frac{\mathrm{tr}(V^{-1})}{\underline{\lambda}(W_\epsilon)}\le \frac{2\mathrm{tr}(V^{-1})\norm{A+BK}}{\epsilon}.
\end{multline}
Moreover, from Lemma \ref{le:aux}:
\begin{equation}\label{eq:loose2}
\norm{A{+}BK}\le\|A{+}BK^\star{+}B\tilde{K}\|\le\norm{A{+}BK^\star}+\norm{B}f(\lambda).
\end{equation}
Combining \eqref{eq:loose1}-\eqref{eq:loose2} yields the required result. \frQED
\end{proof}
Combining Theorem \ref{th:lb} with Proposition \ref{prop:loose} yields lower bounds for Problems \ref{pr:tr} and \ref{pr:trinv} across all values of the trade-off parameter $\lambda$ that quantify the fundamental limits of adversarial observability reduction under a given level of performance degradation.

\begin{figure}[!t] 
		\centering
                \includegraphics[width=7cm,height=3.5cm]{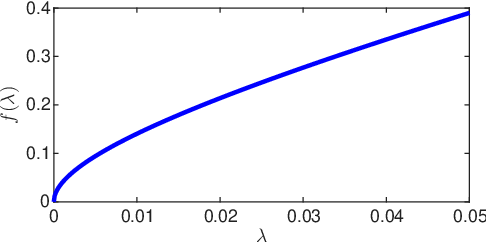}
                \caption{\small  Illustration of the function $f(\lambda)$. }
                \label{fig:fl}
\end{figure}

\subsection{Robustness to Adversary's Sensing Uncertainty}

Assumption \ref{ass:assym} requires knowledge of the adversary’s sensing matrix 
$C$. This is often realistic in practice, for instance, when the adversary employs radar with known location and measurement geometry to track an aircraft. In some settings, however, the system operator may only know a superset of the adversary’s possible sensing configurations, rather than the exact matrix $C$.

Solving Problems \ref{pr:tr} and \ref{pr:trinv} using such a superset generally yields conservative (and thus suboptimal) solutions. Nevertheless, any controller obtained in this manner provides a certified upper bound on the adversary’s observability metric, as we establish next.

\begin{proposition}\label{prop:robustness}
    Assume that the adversary's sensing matrix is $\hat{C}\in\mathbb{R}^{\hat{m}\times n}$, where
    $\hat{m}\le m$, and where the rows of $\hat{C}$ are a subset of the rows of $C$. Then, $J_{oj}(K;~\hat{C})\le J_{oj}(K;~C)$, $j=1,2$.
\end{proposition}

\begin{proof}
Denote as $W_C$ and $W_{\hat{C}}$ the observability Gramians corresponding to $C$ and $\hat{C}$. Then, by definition, we have $(A+BK)^\mathrm{T}W_C+W_C(A+BK)+C^\mathrm{T}C=0$ and $(A+BK)^\mathrm{T}W_{\hat{C}}+W_{\hat{C}}(A+BK)+\hat{C}^\mathrm{T}\hat{C}=0$. Since $\hat{C}^\textrm{T}\hat{C}=C^\textrm{T}S^\textrm{T}SC$ for a row selection matrix $S\in\mathbb{R}^{\hat{m}\times m}$, we have $C^\textrm{T}C-\hat{C}^\textrm{T}\hat{C}=C^\textrm{T}(I-S^\textrm{T}S)C\succeq 0$. Hence, $W_{\hat{C}}\preceq W_{C}$, and thus $J_{o1}(K;~\hat{C})=\textrm{tr}(W_{\hat{C}} V)  \le\textrm{tr}(W_C V)= J_{o1}(K;~C)$. The fact that $J_{o2}(K;~\hat{C})\le J_{o2}(K;~C)$ follows similarly since negated inversion preserves matrix ordering. \frQED 
\end{proof}

Another possible sensing uncertainty concerns perturbations in the numerical values of $C$, rather than uncertainty in its rows (e.g., $\hat{C}=C+\delta C$ for some small $\delta C$). In such cases, Problem~1 is inherently more robust since its objective depends quadratically on $C$ (see next section), whereas Problem~\ref{pr:trinv} can be more sensitive to perturbations in $C$, particularly when the Gramian approaches singularity due to the inverse-based objective. One could also address such uncertainty in the worst-case sense through a robust SDP formulation, though this remains beyond the scope of this work.

\section{Algorithmic Solutions to Problems \ref{pr:tr} and \ref{pr:trinv}}\label{sec:sdp}

We subsequently provide algorithmic solutions to 
Problems \ref{pr:tr} and \ref{pr:trinv}. Specifically, we provide an exact algorithmic solution for Problem \ref{pr:tr} by reformulating it into an equivalent SDP, and an approximate algorithmic solution for Problem \ref{pr:trinv} using sequential SDP.

\subsection{Solution to Problem \ref{pr:tr} using an SDP}

We begin by showing that Problem \ref{pr:tr} can be cast as an SDP.
\begin{theorem}\label{th:SDP}
The solution to Problem \ref{pr:tr} is given by $K=XS^{-1}$, where $X$ and $S$ solve the SDP
\begin{equation}\label{eq:prog4}
\begin{split}
\min_{X,S,Z} \quad &\mathrm{tr}(CSC^\mathrm{T})\\
\mathrm{s.t.}\quad &AS+BX+SA^\mathrm{T}+X^\mathrm{T}B^\mathrm{T}+V= 0,\\
&\mathrm{tr}(SQ)+\mathrm{tr}(Z)\le \mathrm{tr}(P^\star V)+\lambda,\\
&\begin{bmatrix}Z & R^{1/2}X \\ X^\mathrm{T}R^{1/2} & S \end{bmatrix}\succeq0, ~Z\succeq0,~S\succeq0.
\end{split}
\end{equation}
\end{theorem}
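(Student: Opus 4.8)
The plan is to eliminate the bilinear dependence on $K$ through the standard linear-quadratic change of variables $X = KS$, combined with a Lyapunov-duality identity that trades the observability Gramian $W$ for a controllability-type Gramian $S$. I would prove Theorem~\ref{th:SDP} by exhibiting an objective-preserving correspondence between the feasible points of Problem~\ref{pr:tr} and those of the SDP \eqref{eq:prog4}, and then reading off $K = XS^{-1}$ from the optimizer.

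First I would record the dual Lyapunov identity. When $A+BK$ is Hurwitz, let $S$ be the unique solution of $(A+BK)S + S(A+BK)^\mathrm{T} + V = 0$, so that $S = \int_0^\infty e^{(A+BK)t} V e^{(A+BK)^\mathrm{T}t}\,\mathrm{d}t$. Substituting the integral form of $W$ and integrating against $V$ gives $\mathrm{tr}(WV) = \mathrm{tr}(C^\mathrm{T}CS) = \mathrm{tr}(CSC^\mathrm{T})$, which linearizes the objective. The same computation applied to $P$ yields $\mathrm{tr}(PV) = \mathrm{tr}((Q+K^\mathrm{T}RK)S) = \mathrm{tr}(SQ) + \mathrm{tr}(RKSK^\mathrm{T})$. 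Setting $X = KS$ turns the Gramian equation into the linear constraint $AS + BX + SA^\mathrm{T} + X^\mathrm{T}B^\mathrm{T} + V = 0$ and rewrites the quadratic term as $\mathrm{tr}(RXS^{-1}X^\mathrm{T}) = \mathrm{tr}(R^{1/2}XS^{-1}X^\mathrm{T}R^{1/2})$, since $KSK^\mathrm{T} = XS^{-1}X^\mathrm{T}$. Introducing the slack variable $Z$ and applying the Schur complement, the epigraph constraint $Z \succeq R^{1/2}XS^{-1}X^\mathrm{T}R^{1/2}$ is equivalent to the LMI in \eqref{eq:prog4}, so that $\mathrm{tr}(SQ)+\mathrm{tr}(Z)\le \mathrm{tr}(P^\star V)+\lambda$ encodes the performance constraint $\mathrm{tr}(PV)\le \mathrm{tr}(P^\star V)+\lambda$.

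Next I would verify both directions of the correspondence. Given feasible $(K,P,W)$ for Problem~\ref{pr:tr}, I take $S$ to be the Gramian above, $X = KS$, and $Z = R^{1/2}KSK^\mathrm{T}R^{1/2}$; the identities just derived show every constraint of \eqref{eq:prog4} holds with matching objective $\mathrm{tr}(CSC^\mathrm{T}) = \mathrm{tr}(WV)$. Conversely, given feasible $(X,S,Z)$, I set $K = XS^{-1}$ and let $P,W$ be the unique positive semidefinite solutions of the two Lyapunov equations; the PSD ordering implies $\mathrm{tr}(Z)\ge \mathrm{tr}(R^{1/2}XS^{-1}X^\mathrm{T}R^{1/2})$, so the SDP trace constraint gives $\mathrm{tr}((P-P^\star)V)\le\lambda$, while $\mathrm{tr}(WV)=\mathrm{tr}(CSC^\mathrm{T})$ matches the objective. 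Hence the two problems share the same optimal value and $K=XS^{-1}$ recovers the optimal gain.

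The step I expect to be the main obstacle is the well-posedness of the change of variables: I must guarantee $S \succ 0$, so that $K = XS^{-1}$ is defined, and that $A+BK$ is Hurwitz, so that the Gramian identities hold and $P,W$ are uniquely determined. On the SDP side I would argue that any $S \succeq 0$ satisfying $(A+BK)S + S(A+BK)^\mathrm{T} + V = 0$ with $V \succ 0$ is in fact positive definite, since testing the equation on a null vector $v$ of $S$ forces $v^\mathrm{T}Vv = 0$; and that the existence of such an $S$ certifies that $A+BK$ is Hurwitz by the standard Lyapunov stability theorem. On the Problem~\ref{pr:tr} side I would argue that the finiteness of the performance constraint together with $V \succ 0$ precludes a non-Hurwitz $A+BK$, validating the Gramian representations of $\mathrm{tr}(PV)$ and $\mathrm{tr}(WV)$. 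Once this well-posedness is in place, the equivalence and the recovery formula follow directly.
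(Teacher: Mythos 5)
Your proposal follows essentially the same route as the paper's own proof: the trace-duality identities $\mathrm{tr}(WV)=\mathrm{tr}(CSC^\mathrm{T})$ and $\mathrm{tr}(PV)=\mathrm{tr}(S(Q+K^\mathrm{T}RK))$ obtained from the dual Lyapunov equation for $S$, the change of variables $X=KS$, the epigraph variable $Z$, and a final Schur complement, so the two arguments match in all essential steps. Your extra well-posedness discussion is a welcome addition the paper glosses over; the only refinement needed is that on the SDP side you cannot yet write the constraint as $(A+BK)S+S(A+BK)^\mathrm{T}+V=0$ (that presupposes $S$ invertible), so you should first use the LMI $\begin{bmatrix}Z & R^{1/2}X \\ X^\mathrm{T}R^{1/2} & S\end{bmatrix}\succeq0$ to conclude $\ker S\subseteq\ker(R^{1/2}X)=\ker X$, and then run your null-vector test directly on the linear constraint $AS+BX+SA^\mathrm{T}+X^\mathrm{T}B^\mathrm{T}+V=0$ to get $S\succ0$.
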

\begin{proof}
By the constraints of Problem \ref{pr:tr}, we have $W=\int_0^{\infty}e^{(A+BK)^\textrm{T}\tau}C^\mathrm{T}Ce^{(A+BK)\tau}\textrm{d}\tau$ and $P=\int_0^{\infty}e^{(A+BK)^\mathrm{T}\tau}(Q+K^\mathrm{T}RK)e^{(A+BK)\tau}\textrm{d}\tau$. Therefore, using the cyclic property of the trace operator, we obtain $\mathrm{tr}(WV)=\mathrm{tr}(CSC^\mathrm{T})$ and $\mathrm{tr}(PV)=\mathrm{tr}(S(Q+K^\mathrm{T}QK))$, where $S\succ0$ is the unique solution of the Lyapunov equation $(A+BK)S+S(A+BK)^\mathrm{T}+V=0.$
Using these equivalences, we can reduce the dimension of Problem \ref{pr:tr} and conclude it is equivalent to the program
\begin{equation}\label{eq:prog1}
\begin{split}
\min_{K,S} \quad &\mathrm{tr}(CSC^\mathrm{T})\\
\textrm{s.t.}\quad &(A+BK)S+S(A+BK)^\mathrm{T}+V=0,\\
&\mathrm{tr}(S(Q+K^\mathrm{T}RK))\le \mathrm{tr}(P^\star V)+\lambda,~S\succeq0.
\end{split}
\end{equation}
Performing the change of variables $X=KS \Longrightarrow K=XS^{-1}$, it follows that \eqref{eq:prog1} has the same optimal solution as
\begin{equation}\label{eq:prog2}
\begin{split}
\min_{X,S} \quad &\mathrm{tr}(CSC^\mathrm{T})\\
\textrm{s.t.}\quad &AS+BX+SA^\mathrm{T}+X^\mathrm{T}B^\mathrm{T}+V=0,~ S\succeq0,\\
&\mathrm{tr}(SQ)+\mathrm{tr}(R^{1/2}XS^{-1}X^\mathrm{T}R^{1/2})\le \mathrm{tr}(P^\star V)+\lambda.
\end{split}
\end{equation}
Introducing the variable $Z\succeq R^{1/2}XS^{-1}X^\mathrm{T}R^{1/2}$ preserves the optimal solution. Therefore, \eqref{eq:prog2} is equivalent to
\begin{equation}\label{eq:prog3}
\begin{split}
\min_{X,S,Z} \quad &\mathrm{tr}(CSC^\mathrm{T})\\
\textrm{s.t.}\quad &AS+BX+SA^\mathrm{T}+X^\mathrm{T}B^\mathrm{T}+V= 0,\\
&\mathrm{tr}(SQ)+\mathrm{tr}(Z)\le \mathrm{tr}(P^\star V)+\lambda,\\
&Z- R^{1/2}XS^{-1}X^\mathrm{T}R^{1/2}\succeq0, Z\succeq0, S\succeq0.
\end{split}
\end{equation}
Finally, using the Schur complement, we obtain the SDP \eqref{eq:prog4}. \frQED
\end{proof}
Theorem \ref{th:SDP} enables us to reformulate the original Problem \ref{al:SDP} as an SDP where both the objective and constraints are convex. This convex formulation preserves the structural properties of the original problem while allowing for efficient computation using off-the-shelf SDP solvers. Algorithm \ref{al:SDP} outlines the resulting solution procedure.

\subsection{Approximate Solution to Problem \ref{pr:trinv} using Sequential SDP}

\begin{algorithm}[!t]
\caption{SDP for Problem \ref{pr:tr}}
\hspace*{1.7em} \textbf{Output}: \parbox[t]{\dimexpr\linewidth-1.5em}{Exact solution $(K,P,W)$ to Problem \ref{pr:tr}.}
\begin{algorithmic}[1]
\Procedure{}{}
\State Solve for $(X, S, Z)$ from \eqref{eq:prog4}.
\State Compute $K=XS^{-1}$.
\State Compute $P,~W$ from the equality constraints of Problem \ref{pr:tr}.
\EndProcedure
\end{algorithmic}\label{al:SDP}
\end{algorithm}

Since the negated trace of a matrix inverse is concave in the cone of positive definite matrices, Problem \ref{pr:trinv} is not convex and thus cannot be cast as an SDP. However, we can formulate an algorithm to approximately solve it using the convex-concave procedure.

Toward this end, one challenge entailed in Problem \ref{pr:trinv} is that its cost function is nonlinear and, most importantly, it decreases rapidly as the matrix $W_\epsilon$ approaches singularity. As a result, linearizing this cost can lead to large approximation errors that slow down the convergence of a convex-concave procedure.
Nevertheless, by transforming the Lyapunov equation for the observability Gramian into a special form of an algebraic Riccati equation, we can effectively deal with this issue. Specifically, we show Problem \ref{pr:trinv} can be cast as an equivalent difference of convex functions (DC)  program with a linear objective function that involves no inverses of matrix variables.

\begin{lemma}\label{le:relax_pr2}
Under the transformation $Y_{\epsilon}=W_{\epsilon}^{-1}$, Problem \ref{pr:trinv} admits the same optimal solution as the DC program with linear cost:
\begin{align}\label{eq:problem2b}
    \min_{K,P,Y_{\epsilon}}&  \quad-\mathrm{tr}(Y_{\epsilon}V^{-1})\\\nonumber
    \mathrm{s.t.}\quad &\begin{bmatrix}\frac{1}{2}(A+BK+P)^\mathrm{T}(A+BK+P)+Q & K^\mathrm{T} \\ \star & -R^{-1} \end{bmatrix}\\&-\begin{bmatrix}\frac{1}{2}(A+BK-P)^\mathrm{T}(A+BK-P) & 0 \\ \star & 0 \end{bmatrix}\preceq0,\nonumber\\\nonumber
    &\begin{bmatrix}\frac{1}{2}(A+BK+Y_{\epsilon})(A+BK+Y_{\epsilon})^\mathrm{T} & Y_{\epsilon}\sqrt{C^\mathrm{T}C{+}\epsilon I} \\ \star & I \end{bmatrix}\\&-\begin{bmatrix}\frac{1}{2}(A+BK-Y_{\epsilon})(A+BK-Y_{\epsilon})^\mathrm{T} & 0 \\ \star & 0 \end{bmatrix}\preceq0,\nonumber\\
    & \mathrm{tr}((P-P^\star)V)\le \lambda,~P\succeq0,~Y_{\epsilon}\succeq0.\nonumber
    \end{align}
\end{lemma}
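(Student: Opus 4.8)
The plan is to remove both matrix \emph{equality} constraints in Problem~\ref{pr:trinv} in favor of matrix \emph{inequalities}, transform the observability Gramian relation through the substitution $Y_{\epsilon}=W_{\epsilon}^{-1}$, and then expose a difference-of-convex (DC) structure by combining a single quadratic identity with Schur complements. The objective becomes linear automatically under the substitution, so the entire content of the lemma is (i) that the relaxations are exact and (ii) that the resulting Riccati/Lyapunov inequalities can be packaged into the two displayed block matrices.

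First I would note that every feasible point has $A+BK$ Hurwitz: the Gramian constraint forces $W_{\epsilon}$ to solve $(A+BK)^\mathrm{T}W_{\epsilon}+W_{\epsilon}(A+BK)=-(C^\mathrm{T}C+\epsilon I)$ with $C^\mathrm{T}C+\epsilon I\succ0$, and by the Lyapunov theorem a positive semidefinite solution of such an equation with a positive definite right-hand side exists only when $A+BK$ is stable, in which case $W_{\epsilon}\succ0$. This justifies replacing each equality by the corresponding inequality. For the performance constraint, relaxing $(A+BK)^\mathrm{T}P+P(A+BK)+Q+K^\mathrm{T}RK=0$ to $\preceq0$ keeps the Lyapunov solution $\bar P$ as the minimal feasible $P$ in the positive semidefinite order: every inequality-feasible $P$ obeys $P\succeq\bar P$, while $P=\bar P$ is itself feasible. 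Since $P$ enters only through $\mathrm{tr}((P-P^\star)V)\le\lambda$ and $\mathrm{tr}(\cdot\,V)$ is monotone in that order, a gain $K$ admits a feasible $P$ after relaxation iff $\mathrm{tr}((\bar P-P^\star)V)\le\lambda$, i.e. iff it was feasible before. For the Gramian, relaxing to $\preceq0$ yields $W_{\epsilon}\succeq\bar W_{\epsilon}$; because the cost $-\mathrm{tr}(W_{\epsilon}^{-1}V^{-1})$ is monotonically increasing in $W_{\epsilon}$ over the positive definite cone, minimizing it drives $W_{\epsilon}$ down to $\bar W_{\epsilon}$ and makes the inequality tight, so the optimal value is unchanged.

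Next I would apply the change of variables $Y_{\epsilon}=W_{\epsilon}^{-1}$. Taking a congruence of the relaxed Gramian inequality by $Y_{\epsilon}\succ0$ gives the Riccati inequality $(A+BK)Y_{\epsilon}+Y_{\epsilon}(A+BK)^\mathrm{T}+Y_{\epsilon}(C^\mathrm{T}C+\epsilon I)Y_{\epsilon}\preceq0$, while the cost turns into the linear expression $-\mathrm{tr}(Y_{\epsilon}V^{-1})$; since $W_{\epsilon}\mapsto W_{\epsilon}^{-1}$ is a bijection of the positive definite cone, feasibility and optimality are preserved. To reach the stated block form I would use the identity $FG^\mathrm{T}+GF^\mathrm{T}=\tfrac12\big[(F+G)(F+G)^\mathrm{T}-(F-G)(F-G)^\mathrm{T}\big]$ together with its transpose analogue $F^\mathrm{T}G+G^\mathrm{T}F=\tfrac12[(F+G)^\mathrm{T}(F+G)-(F-G)^\mathrm{T}(F-G)]$, taken with $F=A+BK$ and $G=Y_{\epsilon}$ (respectively $G=P$). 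This rewrites the linear Lyapunov terms as a difference of two positive semidefinite quadratic forms, which is exactly the DC splitting between the two subtracted matrices in \eqref{eq:problem2b}. Finally I would absorb the remaining quadratic terms by Schur complements: factoring $Y_{\epsilon}(C^\mathrm{T}C+\epsilon I)Y_{\epsilon}=(Y_{\epsilon}\sqrt{C^\mathrm{T}C+\epsilon I})(Y_{\epsilon}\sqrt{C^\mathrm{T}C+\epsilon I})^\mathrm{T}$ and $K^\mathrm{T}RK=(R^{1/2}K)^\mathrm{T}(R^{1/2}K)$ and introducing the pivots $-I$ and $-R^{-1}$ produces the two LMIs in the statement.

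The step I expect to be the main obstacle is establishing that the equality-to-inequality relaxation is \emph{exact}, i.e. yields the same optimizer rather than a mere relaxation bound. This rests on two monotonicity facts invoked in opposite directions, and the delicate point is verifying that the objective and the performance budget each select the \emph{minimal} Lyapunov solution instead of an interior feasible point. I expect the cleanest presentation is to argue that at any optimizer the Gramian inequality must be active (otherwise $W_{\epsilon}$ could be decreased to strictly lower the cost while retaining feasibility), and that the performance inequality, being free of $P$ in the objective, constrains $K$ only through $\bar P$; assembling these two observations gives a value- and solution-preserving correspondence between Problem~\ref{pr:trinv} and \eqref{eq:problem2b}.
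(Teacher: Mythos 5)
Your overall skeleton matches the paper's (exact relaxation of the two equality constraints, inversion of the Gramian variable, the difference-of-squares splitting, and Schur complements), but your handling of the Gramian constraint takes a genuinely different route. The paper multiplies the Gramian \emph{equation} on both sides by $W_\epsilon^{-1}$ first, obtaining the Riccati equation $-Y_\epsilon(A+BK)^\mathrm{T}-(A+BK)Y_\epsilon-Y_\epsilon(C^\mathrm{T}C+\epsilon I)Y_\epsilon=0$, which admits multiple positive semidefinite solutions; it must then prove that $W_\epsilon^{-1}$ is the stabilizing and hence maximal solution (via Theorem 12.18.4 of \cite{bernstein2009matrix}) so that the cost, which favors large $Y_\epsilon$, selects exactly $W_\epsilon^{-1}$, and only afterwards relaxes to an inequality. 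You instead relax in $W_\epsilon$-space first, where the Lyapunov solution is unique and exactness follows from elementary monotonicity ($W_\epsilon\succeq\bar W_\epsilon$, the exact Gramian for the given $K$, plus anti-monotonicity of matrix inversion), and then pass to $Y_\epsilon=W_\epsilon^{-1}$ by congruence. For positive definite $Y_\epsilon$ this sidesteps the ARE solution-set analysis entirely, which is a genuine simplification; your treatment of the performance constraint (minimal Lyapunov solution $\bar P$, monotonicity of $\mathrm{tr}(\cdot\,V)$) coincides with what the paper leaves implicit.

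There is, however, a gap at the hand-off between your relaxation and the final program: the feasible set of \eqref{eq:problem2b} only requires $Y_\epsilon\succeq0$, and singular feasible points exist (e.g., $Y_\epsilon=0$ satisfies the Riccati inequality for any stabilizing $K$). Such points are not the inverse of any $W_\epsilon$, so your appeal to ``$W_\epsilon\mapsto W_\epsilon^{-1}$ is a bijection of the positive definite cone'' does not exclude the possibility that some singular feasible $Y_\epsilon$ attains $\mathrm{tr}(Y_\epsilon V^{-1})>\mathrm{tr}(\bar W_\epsilon^{-1}V^{-1})$, which would break the claimed equivalence of optimizers. This is precisely the work that the paper's maximal-solution theorem does and your bijection cannot: maximality of the stabilizing solution holds over \emph{all} symmetric solutions of the Riccati inequality, not just the invertible ones. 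The gap is fixable within your framework: the map $Y\mapsto (A+BK)Y+Y(A+BK)^\mathrm{T}+Y(C^\mathrm{T}C+\epsilon I)Y$ is matrix-convex, so for any feasible $Y_\epsilon\succeq0$ and any $\tau\in(0,1)$ the combination $\tau\bar W_\epsilon^{-1}+(1-\tau)Y_\epsilon$ is feasible and positive definite, hence by your own congruence argument satisfies $\tau\bar W_\epsilon^{-1}+(1-\tau)Y_\epsilon\preceq\bar W_\epsilon^{-1}$, which rearranges to $Y_\epsilon\preceq\bar W_\epsilon^{-1}$. Adding this observation (or citing the maximal-solution result for Riccati \emph{inequalities}) closes the argument.
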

\begin{proof}
We first rewrite the cost function of Problem \ref{pr:trinv} into a linear one. To that end,
note that since $P\succeq0$ and $Q+K^\mathrm{T}RK\succ0$, the first constraint of Problem \ref{pr:trinv} is a Lyapunov equation for $A+BK$, hence $A+BK$ is Hurwitz \cite{bernstein2009matrix}. Therefore,  the second constraint in Problem \ref{pr:trinv} implies $W_\epsilon\succ0$ strictly since $C^\textrm{T}C+\epsilon I\succ0$, hence $W_\epsilon^{-1}$ exists. In light of this, multiplying the second constraint of Problem \ref{pr:trinv} from the left and from the right with $W_\epsilon^{-1}$, we obtain the equation
\begin{equation}\label{eq:ARE}
-Y_{\epsilon}(A+BK)^\mathrm{T}-(A+BK)Y_{\epsilon}-Y_{\epsilon}(C^\mathrm{T}C+\epsilon I)Y_{\epsilon}=0,
\end{equation}
for which $Y_{\epsilon}=W_\epsilon^{-1}$ is a solution. However, \eqref{eq:ARE} also admits other positive semidefinite solutions. We thus need to show that using \eqref{eq:ARE} in place of the second constraint in Problem \ref{pr:trinv} preserves optimality. To this end, note that we can rewrite \eqref{eq:ARE}  as
\begin{multline*}
\hspace{-3mm}-Y_{\epsilon}(A+BK+Y_{\epsilon}(C^\mathrm{T}C+\epsilon I))^\mathrm{T}-(A+BK+Y_{\epsilon}(C^\mathrm{T}C+\epsilon I))Y_{\epsilon}\\+Y_{\epsilon}(C^\mathrm{T}C+\epsilon I)Y_{\epsilon}=0.
\end{multline*}
Since $C^\mathrm{T}C+\epsilon I\succ0$, for $Y_{\epsilon}=W_\epsilon^{-1}\succ0$ we have $Y_{\epsilon}(C^\mathrm{T}C+\epsilon I)Y_{\epsilon}\succ0$. By the Lyapunov Theorem, $-(A+BK+Y_{\epsilon}(C^\mathrm{T}C+\epsilon I))^\mathrm{T}$ is strictly stable for $Y_{\epsilon}=W_\epsilon^{-1}$, hence $Y_{\epsilon}=W_\epsilon^{-1}$ is the stabilizing solution to \eqref{eq:ARE}, and hence also the maximal one by Theorem 12.18.4 in \cite{bernstein2009matrix}. Therefore, for any other solution $\hat{Y}_{\epsilon}\ne W_{\epsilon}^{-1}$ to \eqref{eq:ARE} we have $\hat{Y}_{\epsilon}\prec W_{\epsilon}^{-1}$ and thus $-\mathrm{tr}(\hat{Y}_{\epsilon}V^{-1})>-\mathrm{tr}(W_{\epsilon}^{-1}V^{-1})$. It thus follows that if we substitute the second constraint in Problem \ref{pr:trinv} with \eqref{eq:ARE} and perform the variable change $Y_{\epsilon}=W_{\epsilon}^{-1}$, the optimal solution will remain the same. Hence, Problem \ref{pr:trinv} is equivalent to the following optimization problem with linear cost:
\begin{align}\label{eq:pr2temp}
\min_{K,P,Y_{\epsilon}}& \quad -\mathrm{tr}(Y_{\epsilon}V^{-1})\\
\textrm{s.t.}\quad &(A+BK)^\mathrm{T}P+P(A+BK)+Q+K^\mathrm{T}RK=0,\nonumber\\
&-Y_{\epsilon}(A+BK)^\mathrm{T}-(A+BK)Y_{\epsilon}-Y_{\epsilon}(C^\mathrm{T}C+\epsilon I)Y_{\epsilon}=0,\nonumber\\
& \mathrm{tr}((P-P^\star)V)\le \lambda, ~ P\succeq0,~Y_{\epsilon}\succeq0.\nonumber
\end{align}

Subsequently, it is straightforward that relaxing the first equality constraint in \eqref{eq:pr2temp} into a less-than-equal-to inequality preserves the optimal solution and the stability of $A+BK$. In addition, since $\mathrm{tr}(Y_{\epsilon}V^{-1})$ appears negatively in the cost function, we can also relax the second constraint in \eqref{eq:pr2temp} into a greater-than-equal-to inequality. Employing these relaxations and using the Schur complement, it follows that \eqref{eq:pr2temp} is equivalent to
\begin{align}\nonumber
\min_{K,P,Y_{\epsilon}} &\quad -\mathrm{tr}(Y_{\epsilon}V^{-1})\\
\textrm{s.t.}\quad &\begin{bmatrix}(A+BK)^\mathrm{T}P+P(A+BK)+Q & K^\mathrm{T} \\ \star & -R^{-1} \end{bmatrix}\preceq0,\nonumber\\
&\begin{bmatrix}Y_{\epsilon}(A+BK)^\mathrm{T}+(A+BK)Y_{\epsilon} & Y_{\epsilon}\sqrt{C^\mathrm{T}C{+}\epsilon I} \\ \star & -I \end{bmatrix}\preceq0,\nonumber\\
& \mathrm{tr}((P-P^\star)V)\le \lambda,~P\succeq0,~Y_{\epsilon}\succeq0.\nonumber
\end{align}
The final result follows by using $Y_{\epsilon}(A+BK)^\mathrm{T}+(A+BK)Y_{\epsilon}=\frac{1}{2}(A+BK+Y_{\epsilon})(A+BK+Y_{\epsilon})^\mathrm{T}-\frac{1}{2}(A+BK-Y_{\epsilon})(A+BK-Y_{\epsilon})^\mathrm{T}$ and $(A+BK)^\mathrm{T}P+P(A+BK)=\frac{1}{2}(A+BK+P)^\mathrm{T}(A+BK+P)-\frac{1}{2}(A+BK-P)^\mathrm{T}(A+BK-P)$. \frQED
\end{proof}
Let us now define the functions\footnote{We use the subscript $j$ with a slight abuse of notation.}:
\begin{align*}
L_j(X,Z)&=\frac{1}{2}(A+BX_j-Z_j)^\mathrm{T}(A+BX_j-Z_j)\\&+\frac{1}{2}(A+BX_j-Z_j)^\mathrm{T}(B(X-X_j)-(Z-Z_j))\\&+\frac{1}{2}(B(X-X_j)-(Z-Z_j))^\mathrm{T}(A+BX_j-X_j),\\
L_j'(X,Z)&=\frac{1}{2}(A+BX_j-Z_j)(A+BX_j-Z_j)^\mathrm{T}\\&+\frac{1}{2}(A+BX_j-Z_j)(B(X-X_j)-(Z-Z_j))^\mathrm{T}\\&+\frac{1}{2}(B(X-X_j)-(Z-Z_j))(A+BX_j-X_j)^\mathrm{T}.
\end{align*}
To use the convex-concave procedure and obtain an approximate solution to Problem \ref{al:SSPinv}, we linearize the concave parts of the constraints of \eqref{eq:problem2b} about a point $(K_j, P_j, W_j)$ and obtain
\begingroup
\setlength{\arraycolsep}{3pt}
\begin{align}\nonumber
    \min_{K,P,Y_{\epsilon}}  &~~-\mathrm{tr}(Y_{\epsilon}V^{-1})\\\nonumber
    \mathrm{s.t.}\quad &\hspace{-2mm}
    \begin{bmatrix}\frac{1}{2}(A{+}BK{+}P)^\mathrm{T}(A{+}BK{+}P){+}Q{-}L_j(K,P) & K^\mathrm{T} \\ \star & -R^{-1} \end{bmatrix}{\preceq}0,\nonumber\\\nonumber
    &\hspace{-9mm}\begin{bmatrix}\frac{1}{2}(A{+}BK{+}Y_{\epsilon})(A{+}BK{+}Y_{\epsilon})^\mathrm{T}{-}L_j'(K,Y_{\epsilon}) & Y_{\epsilon}\sqrt{C^\mathrm{T}C{+}\epsilon I} \\ \star & I \end{bmatrix}{\preceq}0,\\ &\mathrm{tr}((P-P^\star)V)\le \lambda,~ P\succeq0,~Y_\epsilon\succeq0.\nonumber
\end{align}
\endgroup

Using the Schur complement, this is equivalent to
  \begin{align}\label{eq:problem2semi}
    \min_{K,P,Y_{\epsilon}}  &\quad-\mathrm{tr}(Y_{\epsilon}V^{-1})\\\nonumber
    \mathrm{s.t.}\quad &\begin{bmatrix}Q-L_j(K,P) & K^\mathrm{T} & \frac{1}{\sqrt{2}}(A+BK+P)^\mathrm{T}  \\ \star & -R^{-1} & 0 \\ \star & \star & -I \end{bmatrix}\preceq0,\nonumber\\\nonumber
    &\hspace{-6mm}\begin{bmatrix}-L_j'(K,Y_{\epsilon}) & Y_{\epsilon}\sqrt{C^\mathrm{T}C+\epsilon I} & \frac{1}{\sqrt{2}}(A+BK+Y_{\epsilon})  \\ \star & -I & 0 \\ \star & \star & -I \end{bmatrix}\preceq0,\nonumber\\
    & \mathrm{tr}((P-P^\star)V)\le \lambda,~P\succeq0,~Y_{\epsilon}\succeq0,\nonumber
    \end{align}  
the iterative solution of which, as shown in in Algorithm \ref{al:SSPinv}, should eventually lead us to a stationary point of \eqref{eq:problem2b}.

\begin{remark}
Algorithm~\ref{al:SSPinv} requires initialization with sets of matrices $K_0, P_0, Y_{\epsilon,0}$ that are feasible for Problem~\ref{pr:trinv}.  One such initialization is obtained by selecting $K_0$ as the linear-quadratic gain corresponding to $\lambda = 0$. The matrices $P_0$ and $Y_{\epsilon,0}$ are then obtained by solving the first and second Lyapunov equations of Problem~\ref{pr:trinv} under $K = K_0$, respectively, with $Y_{\epsilon,0}$ computed as the inverse of the resulting $W_\epsilon$. With such an initialization, \eqref{eq:problem2semi} will also be feasible. 
\end{remark}

\begin{algorithm}[!t]
\caption{Sequential SDP for Problem \ref{pr:trinv}}
\hspace*{1.7em}\textbf{Input}: \parbox[t]{\dimexpr\linewidth-4.5em}{Feasible point $K_0,P_0,Y_{\epsilon,0}$ to Problem \ref{pr:trinv}, tolerance $\delta>0$.}\\
\hspace*{\algorithmicindent} \textbf{Output}: \parbox[t]{\dimexpr\linewidth-1.5em}{Estimated solution $(\hat{K},\hat{P},\hat{Y}_{\epsilon})$ to Problem \ref{pr:trinv}.}
\begin{algorithmic}[1]
\Procedure{}{}
\State $j\leftarrow 0$.
\While{$j=0$ or $|\mathrm{tr}(Y_{\epsilon,j}-Y_{\epsilon,j-1})|\ge\delta$}
\State Solve for $K, P, Y_{\epsilon}$ from \eqref{eq:problem2semi}.
\State $(K_{j+1}, P_{j+1}, Y_{\epsilon,j+1})\leftarrow (K, P, Y_{\epsilon})$.
\State $j\leftarrow j+1$.
\EndWhile
\State $(\hat{K}, \hat{P}, \hat{Y}_{\epsilon})\leftarrow (K, P, Y_{\epsilon})$.
\EndProcedure
\end{algorithmic}\label{al:SSPinv}
\end{algorithm}

We summarize the convergence properties of Algorithm \ref{al:SSPinv} in the following theorem. This convergence relies fundamentally on the regularization of Problem \ref{pr:trinv} with the parameter $\epsilon>0$.

\begin{theorem}\label{th:theorem}
Algorithm \ref{al:SSPinv} converges to a stationary point of \eqref{eq:problem2b}.
\end{theorem}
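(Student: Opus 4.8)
The plan is to recognize Algorithm \ref{al:SSPinv} as an instance of the convex--concave procedure applied to the DC program \eqref{eq:problem2b}, and then to run the standard majorization--minimization convergence argument, taking care to isolate exactly where the regularization $\epsilon>0$ is indispensable. First I would verify the two structural properties that make the procedure well posed. The matrix quadratics $\tfrac12(A+BK-P)^\tr(A+BK-P)$ and $\tfrac12(A+BK-Y_{\epsilon})(A+BK-Y_{\epsilon})^\tr$, which enter the two DC constraints of \eqref{eq:problem2b} with a negative sign, are matrix-convex in the decision variables, and $L_j(K,P)$, $L_j'(K,Y_{\epsilon})$ are precisely their first-order Taylor expansions about the current iterate $(K_j,P_j,Y_{\epsilon,j})$. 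Since a convex function dominates any of its supporting affine minorants, replacing these quadratics by $L_j,L_j'$ only tightens the left-hand sides of the two linear matrix inequalities; hence every point feasible for the linearized program \eqref{eq:problem2semi} is feasible for \eqref{eq:problem2b}. Moreover, at the expansion point the linearization agrees with the quadratic it replaces in both value and gradient, so $(K_j,P_j,Y_{\epsilon,j})$ is itself feasible for the $j$-th subproblem. This guarantees that \eqref{eq:problem2semi} is feasible at every iteration and that the iterates never leave the feasible set of \eqref{eq:problem2b}.

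Second, I would establish monotonicity and boundedness of the objective. Because the cost $-\mathrm{tr}(Y_{\epsilon}V^{-1})$ is linear (the payoff of the reformulation in Lemma \ref{le:relax_pr2}) and the previous iterate is feasible for the current subproblem, the optimizer of \eqref{eq:problem2semi} attains a cost no larger than at the previous iterate; thus $\{-\mathrm{tr}(Y_{\epsilon,j}V^{-1})\}_j$ is nonincreasing. This is where the regularization enters: exactly as in the proof of Proposition \ref{prop:loose}, the bound $W_{\epsilon}\succeq \tfrac{\epsilon}{2\norm{A+BK}}I$ (with $A+BK$ Hurwitz throughout, by the Lyapunov argument in Lemma \ref{le:relax_pr2} inherited through feasibility preservation) makes $\mathrm{tr}(Y_{\epsilon}V^{-1})=\mathrm{tr}(W_{\epsilon}^{-1}V^{-1})$ uniformly bounded above, so the objective is bounded below. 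A nonincreasing sequence that is bounded below converges; and the same regularization, combined with the performance constraint $\mathrm{tr}((P-P^\star)V)\le\lambda$ that controls $\norm{\tilde K}$ through Lemma \ref{le:aux}, keeps $K$, $P$, and $Y_{\epsilon}$ in a compact set, so a convergent subsequence of iterates exists.

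Finally, I would argue that every limit point is stationary for \eqref{eq:problem2b}. At each $j$ the subproblem solution satisfies the KKT conditions of the convex program \eqref{eq:problem2semi}. The decisive fact is the first-order agreement of $L_j,L_j'$ with the quadratics they replace at the expansion point, which makes the convexified constraints tangent to the original DC constraints there; consequently, passing to the limit along a convergent subsequence $(K_j,P_j,Y_{\epsilon,j})\to(K^\star,P^\star,Y_{\epsilon}^\star)$ and using continuity of the problem data, the KKT system of the subproblems collapses onto the KKT system of \eqref{eq:problem2b} at the limit point. I expect this last step to be the \emph{main obstacle}: making the multiplier limits rigorous requires a constraint qualification at the limit point together with boundedness of the dual variables. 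I plan to discharge it by observing that the regularization forces the Gramian to be strictly positive definite, so the feasible set of \eqref{eq:problem2b} has nonempty interior and a Slater-type / Mangasarian--Fromovitz condition holds; this both bounds the multipliers and legitimizes the limiting argument. This once more underscores that $\epsilon>0$ is what underwrites the convergence claim.
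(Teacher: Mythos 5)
You take the same route as the paper: identify Algorithm \ref{al:SSPinv} as the convex--concave procedure applied to the DC program \eqref{eq:problem2b}, and reduce convergence to standard majorization--minimization theory. The difference is division of labor. The paper verifies only the hypotheses that established CCP convergence results require --- continuous differentiability of every function in \eqref{eq:problem2b}, boundedness of the feasible set for $(K,P)$ via Lemma \ref{le:aux} and the performance constraint, and boundedness of the feasible set for $Y_\epsilon$ --- and then invokes \cite{lipp2016variations, NIPS2009} for exactly the machinery you reconstruct by hand (tangent majorants, feasibility preservation, monotone decrease of the linear cost, compactness, KKT limits). Your reconstruction is sound in outline, and you correctly isolate $\epsilon>0$ as the ingredient that makes the $Y_\epsilon$-feasible set compact; but note that the step you flag as the ``main obstacle'' (multiplier limits under a constraint qualification) is precisely what the paper outsources to those citations, so a self-contained write-up along your lines would still owe that argument in full.

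One step of yours needs repair: the lower bound on the objective. You write $\mathrm{tr}(Y_\epsilon V^{-1})=\mathrm{tr}(W_\epsilon^{-1}V^{-1})$ and then apply the bound $W_\epsilon\succeq\frac{\epsilon}{2\norm{A+BK}}I$ from Proposition \ref{prop:loose}. But the identification $Y_\epsilon=W_\epsilon^{-1}$ is valid only for the equality-constrained formulation \eqref{eq:pr2temp}; in \eqref{eq:problem2b} both matrix constraints have been relaxed to inequalities (that relaxation is what makes the problem DC), and the iterates of Algorithm \ref{al:SSPinv} are only guaranteed to satisfy the inequalities. At such points you have at best a one-sided comparison, which suffices but must be proved: the second constraint of \eqref{eq:problem2b} implies $-Y_\epsilon(A+BK)^\tr-(A+BK)Y_\epsilon-Y_\epsilon(C^\tr C+\epsilon I)Y_\epsilon\succeq0$; for $Y_\epsilon\succ0$, congruence by $Y_\epsilon^{-1}$ gives $(A+BK)^\tr Y_\epsilon^{-1}+Y_\epsilon^{-1}(A+BK)+C^\tr C+\epsilon I\preceq0$, and since $A+BK$ is Hurwitz on the feasible set, the comparison principle for Lyapunov equations yields $Y_\epsilon^{-1}\succeq W_\epsilon$, i.e., $Y_\epsilon\preceq W_\epsilon^{-1}$, after which your bound goes through. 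The paper's own argument avoids the Gramian altogether and also covers singular $Y_\epsilon\succeq0$: if $\norm{Y_\epsilon}\to\infty$ on the feasible set, the quadratic term $Y_\epsilon(C^\tr C+\epsilon I)Y_\epsilon$ --- positive definite precisely because $\epsilon>0$ --- dominates the linear terms (which are bounded because $K$ is bounded) and violates the inequality, so the feasible set for $Y_\epsilon$ is bounded. Either fix works; asserting the equality does not.
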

\begin{proof}
Note that Problem \ref{pr:trinv} is equivalent to \eqref{eq:problem2b} according to Lemma \ref{le:relax_pr2}. In addition, \eqref{eq:problem2b} is a DC program wherein all functions involved are continuously differentiable with respect to $K, P, Y_{\epsilon}$. Moreover, by Lemma \ref{le:aux} and the constraint $\textrm{tr}((P-P^\star)V)\le\lambda$,
boundedness of the feasible set for $K$ and $P$ follows. Finally, note that the second constraint in \eqref{eq:problem2b} implies
\begin{equation}\label{eq:aretemp}
-Y_{\epsilon}(A+BK)^\mathrm{T}-(A+BK)Y_{\epsilon}-Y_{\epsilon}(C^\mathrm{T}C+\epsilon I)Y_{\epsilon}\succeq0.
\end{equation}
Since the feasible set for $K$ is bounded, $-(A+BK)$ is bounded. Therefore, if $Y_{\epsilon}\succeq0$ and $\norm{Y_{\epsilon}}\rightarrow\infty$, then \eqref{eq:aretemp} cannot hold because the quadratic term $Y_{\epsilon}(C^\mathrm{T}C+\epsilon I)Y_{\epsilon}$ will force the left-hand side of \eqref{eq:aretemp} to become negative definite. This is particularly true because $C^\textrm{T}C+\epsilon I\succ0$, following $\epsilon>0$. Hence the feasible set for $Y_{\epsilon}$ is bounded above, hence for any $X\succeq0$ the set of feasible $Y_{\epsilon}$ satisfying $-\mathrm{tr}(Y_{\epsilon}V^{-1})\le -\mathrm{tr}(XV^{-1})$ is bounded. Therefore, by \cite{lipp2016variations, NIPS2009}, Algorithm \ref{al:SSPinv} converges to a stationary point of \eqref{eq:problem2b}. \frQED
\end{proof}

\begin{remark}
By Lemma \ref{le:relax_pr2}, \eqref{eq:problem2b} has the same optimal solution as Problem \ref{pr:trinv}. Thus, Theorem \ref{th:theorem} ensures that if Algorithm \ref{al:SSPinv} converges to a minimum, this will also be the minimum of Problem \ref{pr:trinv}.
\end{remark}

\section{Numerical Examples}\label{sec:sim}

\subsection{Intuitive Example}

To provide an easy understanding of our algorithm, we first run simulations on the following system:
\begin{align*}
\dot{x}(t)&=\begin{bmatrix} 0 & 1 \\ 0 & 0 \end{bmatrix}x(t)+\begin{bmatrix}1 \\ 1 \end{bmatrix}u(t),\quad y(t)=\begin{bmatrix} 1 & 0 \end{bmatrix}x(t).
\end{align*}
This system has the interesting property that it is observable from $C=\begin{bmatrix} 1 & 0 \end{bmatrix}$, but at the same time, by inspection, we can infer that it becomes unobservable with the state feedback $u=-x_2$. This feedback clearly decouples the first state from the second one, making the latter unobservable. Still, here we are interested in a feedback controller that not only makes the system less observable from $C$, but also has a hard upper bound on its distance from optimality.

\begin{figure}[!t] 
		\centering
                \includegraphics[width=8.5cm,height=4cm]{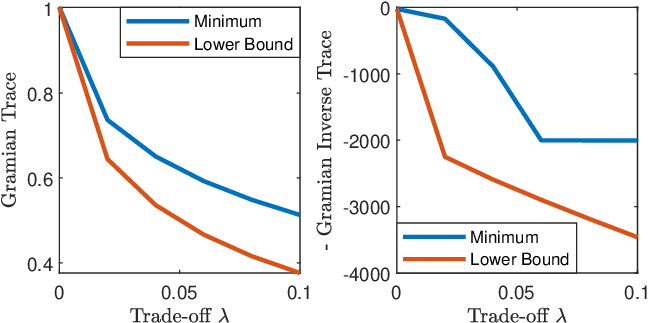}
                \caption{\small Minimum values of Problems \ref{pr:tr} and \ref{pr:trinv} as computed by Algorithm \ref{al:SDP} and \ref{al:SSPinv}, as well as the corresponding lower bounds calculated in Section \ref{sec:tradeoff}, for various values of the trade-off parameter $\lambda$.}
                \label{fig:tradeoff}
\end{figure}

To get a controller that balances loss observability from $C$ with optimal performance, we run Algorithm \ref{pr:trinv} with parameters $Q=0.2I_2$, $R=1$, $\delta=10^{-3}$, $\epsilon=10^{-4}$, $V=I_2$, and initialize $K_0=K^\star$ based on the positive definite solution of the ARE \eqref{eq:realARE}. To solve the underlying SDPs, we use CVX, a package for specifying and solving convex programs \cite{cvx, gb08}. With $\lambda=0.01$ and $\lambda=0.1$, the closed-loop matrices $A+B\hat{K}$ that we obtain based on the output of Algorithm \ref{al:SSPinv} are
\begin{align*}
\lambda=0.01:~~A+B\hat{K}&=\begin{bmatrix} -0.4873  &  0.1905 \\ -0.4873  & -0.8095 \end{bmatrix},\\\lambda=0.1:~~ A+B\hat{K}&=\begin{bmatrix} -0.8572  & -0.0018 \\   -0.8572  & -1.0018 \end{bmatrix},
\end{align*}
whereas the nominal optimal closed-loop matrix $A+BK^\star$ is:
\begin{align*}
A+BK^\star=\begin{bmatrix} -0.4472 &   0.3095 \\
   -0.4472  & -0.6905 \end{bmatrix}.
\end{align*}
The pattern we notice is the one we would ideally expect: as we gain more flexibility to deviate from strict optimality, the resulting control gain $\hat{K}$ increasingly decouples the first state of the system from the second one. This is achieved primarily by forcing the upper-right entry of the matrix $A+B\hat{K}$ to approach zero. In doing so, we obtain a control policy that remains close to optimal in terms of performance, while simultaneously making it significantly harder to observe the system through the output matrix $C$.

Figure \ref{fig:tradeoff} shows the trade-off between performance and adversarial observability. Specifically, it shows the minimum values of Problems \ref{pr:tr} and \ref{pr:trinv} as computed by Algorithm \ref{al:SDP} and \ref{al:SSPinv}, as well as the corresponding lower bounds we calculated in Section \ref{sec:tradeoff}, for various values of the trade-off parameter $\lambda$. We verify that the theoretically derived lower bounds are correct. While they are relatively tight for Problem \ref{pr:tr}, they become looser for Problem \ref{pr:trinv} because the inverse of the Gramian tends to be close to singular as $\lambda$ increases, and because the obtained solution is only a local minimum. Finally, we also validate the intuition behind Theorem \ref{th:theorem}: when minimizing the trace of the observability Gramian, smaller performance trade-offs yield a larger payoff in adversarial observability. On the other hand, when maximizing the trace of the Gramian's inverse, the payoff is steeper when the performance trade-off becomes larger.

\subsection{ADMIRE Aircraft}

We perform further simulations on the ADMIRE aircraft \cite{admire}, which has $n=5$ states and $m=7$ control inputs. We assume that an adversary can measure the first, third, and fifth state, while we refer to \cite{admire} for the expressions of the matrices $A, B$.

\begin{figure}[!t] 
		\centering
                \includegraphics[width=8.5cm,height=4cm]{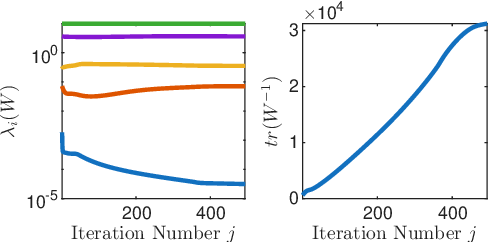}
                \caption{\small  Evolution of the eigenvalues $\lambda_i(W)$, $i=1,\ldots,5$, of the observability Gramian $W$ and of the trace of its inverse, $\mathrm{tr}(W^{-1})$, for $(A+BK_j,C)$ during Algorithm~\ref{al:SSPinv}. }
                \label{fig:combtrinv}
\end{figure}

\begin{table}[!t]
  \begin{center}
  \caption{\small Eigenvalues under the nominal optimal gain $K^\star$ and the gain obtained by Algorithms \ref{al:SDP}-\ref{al:SSPinv}, and the corresponding costs. \label{tab:ratio2}}
  \resizebox{\columnwidth}{!}{%
   \begin{tabular}{c||c|c|c|c|c|c|c}
      \toprule 
      \textbf{ } &  $\textrm{tr}(W)$ &  $\textrm{tr}(W^{-1})$ &  $\lambda_1$ &  $\lambda_2$ &  $\lambda_3$ &  $\lambda_4$ &  $\lambda_5$\\
      \midrule 
      Nom.  &  $13.82$ &  $558$ &   $9.79$ &  $3.63$ &  $0.341$ &  $0.0709$ &  $0.00180$\\
       Alg. 1 &  $11.01$ &  $598$ &   $9.34$ &  $1.28$ &  $0.331$ & $ 0.0520$ &  $0.00180$  \\
        Alg. 2 &  $13.80$ &  $31504$ &   $9.78$ &  $3.60$ &  $0.351$ &  $0.0707$ &  $0.00003$  \\
      \bottomrule 
    \end{tabular}}
  \end{center}
\end{table}

First, we perform Algorithm \ref{al:SSPinv}, to obtain a control gain $\hat{K}$ that balances the loss of observability from $C$ with the performance of the closed-loop system $A+B\hat{K}$. We choose the parameters of the algorithm as  $Q=I_5$, $R=10I_7$, $\delta=10$ (scaled according to this setup's cost function), $\epsilon=10^{-5}$, $V=I_5$, $\lambda=1$, and initialize $K_0=K^\star$ based on the solution of the ARE \eqref{eq:realARE}.

Figure \ref{fig:combtrinv} shows the evolution of the eigenvalues -- and the trace of the inverse -- of the observability Gramian of $(A+BK_j,C)$ throughout the execution of Algorithm \ref{al:SSPinv}. We observe that the eigenvalue that is affected the most is the minimum one, which is driven very close to zero (see also Table \ref{tab:ratio2}).  This is both an expected and a desirable property. On the one hand, maximizing the trace of the inverse of the observability Gramian is equivalent to making the Gramian singular, which is indeed most easily achieved by making the minimum eigenvalue equal to zero. On the other hand, this outcome means that the resulting closed-loop pair $(A+B\hat{K},C)$ is very close to being unobservable, and hence an adversary who knows $\hat{K}$ and who observes $y(t)$ will not be able to easily reconstruct the full state $x(t)$ of the system. The figure also illustrates a substantial increase in the trace of the observability Gramian, all while maintaining acceptable control performance — as measured by the linear-quadratic cost — with a relatively small compromise, limited to a trade-off of $\lambda = 1$.

Second, we perform Algorithm \ref{al:SDP}, to obtain a control gain $K$ that minimizes observability of $(A+BK,C)$ on average while trading off with performance. We use the same parameter values as in the previous example.  Table \ref{tab:ratio2} shows the eigenvalues and the trace of the observability Gramian both under the optimal gain $K^\star$ and under the gain obtained from Algorithm \ref{al:SDP}.
 The pattern here differs from that in the case of Algorithm \ref{al:SSPinv}; instead of trying to minimize the minimum eigenvalue as much as possible, the algorithm focuses more on minimizing the larger eigenvalues of the Gramian. This is also an expected behavior, as there is significantly more to be gained in terms of the cost function by minimizing the largest eigenvalues, instead of the minimum eigenvalue that is many orders of magnitude smaller. Another noteworthy detail is that, out of the five eigenvalues of the Gramian, only the second and the fourth one underwent significant relative reduction. This is owed to the fact that the adversary can observe three out of the five states, so only two observability eigenvalues can decrease significantly at a time.

\begin{figure}[t]
    \centering
    
    \begin{subfigure}{\linewidth}
        \centering
        \includegraphics[width=1\linewidth, height=2.9cm]{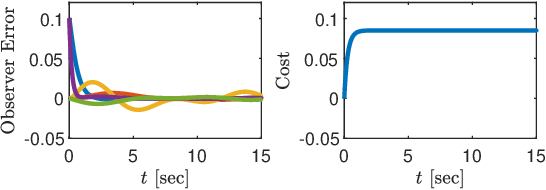}
        \caption{Trajectories under the optimal feedback gain.}
        \label{fig:sub1}
    \end{subfigure}

    \vspace{1em}  

    \begin{subfigure}{\linewidth}
        \centering
        \includegraphics[width=1\linewidth, height=2.9cm]{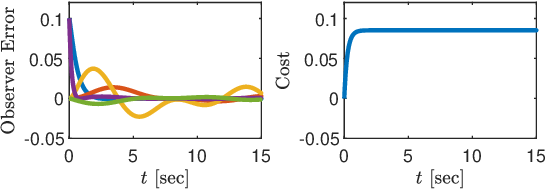}
        \caption{Trajectories under the feedback gain from Algorithm \ref{al:SDP}.}
        \label{fig:sub2}
    \end{subfigure}

    \vspace{1em}

    \begin{subfigure}{\linewidth}
         \centering ~~~
        \includegraphics[width=0.95\linewidth, height=2.9cm]{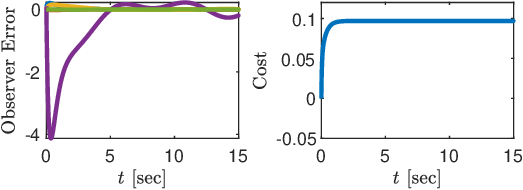}
        \caption{Trajectories under the feedback gain from Algorithm \ref{al:SSPinv}.}
        \label{fig:sub3}
    \end{subfigure}

    \caption{Trajectories of the adversary's observation errors, and the closed-loop performance cost.}
    \label{fig:three-stack}
\end{figure}

Finally, to showcase the adversarial observability and performance trade-offs provided by Algorithms \ref{al:SDP} and \ref{al:SSPinv}, we simulate the closed-loop behavior of the system along with a Luenberger observer -- used by an adversary to reconstruct the system state --  that places observation poles to $-15; -5; -2; -2; -1$. For this state estimation scenario, we assume the sensing components of the adversary are corrupted by deterministic noise composed of $5$ sinusoids, with frequencies between $0$ to $\frac{1}{2\pi}$ and magnitudes equal to $0.01$. Figure \ref{fig:three-stack} shows the incurred closed-loop performance cost along with the estimation errors of the adversary under a) the nominal optimal controller; b) the controller obtained by Algorithm \ref{al:SDP}; and c) the controller obtained by Algorithm \ref{al:SSPinv}. We observe that the controller for Algorithm 1 magnifies adversarial estimation errors across all states with almost negligible loss of performance, while Algorithm \ref{al:SDP} increases the estimation error of the fourth state to completely unreliable levels with a little more performance trade-off. This pattern also aligns with the interpretation of the metrics \eqref{eq:obsfuns}: on the one hand, the trace of the Gramian captures estimation capabilities across all directions in the state space; on the other hand, the trace of the Gramian inverse is skewed toward the least observable state which, in this example, corresponds to the fourth state of the aircraft.

\section{Conclusion}\label{sec:conc}

We study the problem of feedback control that minimizes the observability of a set of sensors deployed by an adversary, while conforming to strict closed-loop performance constraints. We quantify observability using metrics related to the trace of the observability Gramian and its inverse. For both of these, we establish theoretical lower bounds on their optimal values as a function of the performance trade-off parameter. Finally, we optimize the trace of the observability Gramian through an SDP, while, for the trace of the Gramian inverse, we perform this optimization approximately using sequential SDP.

Future work includes the consideration of strict output feedback controllers that rely only on partial state information, but which still achieve a balance between having sufficient performance and minimizing adversarial observability.

\bibliographystyle{ieeetr}      
\bibliography{references.bib}

\end{document}